\newcommand{\avec}{{\bf{a}}}
\newcommand{\bvec}{{\bf{b}}}
\newcommand{\cvec}{{\bf{c}}}
\newcommand{\dvec}{{\bf{d}}}
\newcommand{\yvec}{{\bf{y}}}
\newcommand{\wvec}{{\bf{w}}}
\newcommand{\xvec}{{\bf{x}}}
\newcommand{\vvec}{{\bf{v}}}
\newcommand{\gvec}{{\bf{g}}}
\newcommand{\onevec}{{\bf{1}}}
\newcommand{\zerovec}{{\bf{0}}}
\newcommand{\alphavec}{{\bf{\alpha}}}
\newcommand{\Phivec}{{\bf{\Phi}}}
\newcommand{\Lambdamat}{{\bf{\Lambda}}}
\newcommand{\Gammamat}{{\bf{\Gamma}}}
\newcommand{\Amat}{{\bf{A}}}
\newcommand{\Bmat}{{\bf{B}}}
\newcommand{\Cmat}{{\bf{C}}}
\newcommand{\Dmat}{{\bf{D}}}
\newcommand{\Imat}{{\bf{I}}}
\newcommand{\Lmat}{{\bf{L}}}
\newcommand{\Mmat}{{\bf{M}}}
\newcommand{\Vmat}{{\bf{V}}}
\newcommand{\Wmat}{{\bf{W}}}
\newcommand{\Sigmamat}{{\bf{\Sigma}}}
\newcommand{\define}{\stackrel{\triangle}{=}}
\def\alphavec{{\mbox{\boldmath $\alpha$}}}
\def\muvec{{\mbox{\boldmath $\mu$}}}
\def\alphavecsmall{{\mbox{\boldmath {\scriptsize $\alpha$}}}}
\newcommand{\be}{\begin{equation}}
\newcommand{\ee}{\end{equation}}
\newcommand{\beqna}{\begin{eqnarray}}
\newcommand{\eeqna}{\end{eqnarray}}
\DeclareMathAlphabet{\pazocal}{OMS}{zplm}{m}{n}
\DeclareMathOperator*{\argmin}{argmin}
\DeclareMathOperator*{\argmax}{argmax}
\DeclareMathOperator{\EX}{\mathbb{E}}
\newtheorem{Claim}{Claim}
\newtheorem{Theorem}{Theorem}
\begin{document}
\title{Bayesian Estimation of Graph Signals}
	\author{Ariel Kroizer,
	  Tirza Routtenberg, \IEEEmembership{Senior Member, IEEE}, and
	    Yonina C. Eldar,~\IEEEmembership{Fellow, IEEE}
		\thanks{©2022 IEEE.  Personal use of this material is permitted.  Permission from IEEE must be obtained for all other uses, in any current or future media, including reprinting/republishing this material for advertising or promotional purposes, creating new collective works, for resale or redistribution to servers or lists, or reuse of any copyrighted component of this work in other works.\\
		A. Kroizer and T. Routtenberg are with the School of Electrical and Computer Engineering, Ben-Gurion University of the Negev, Beer Sheva, Israel. 
		Y. C. Eldar is with the Faculty of Mathematics and Computer Science, Weizmann Institute of Science, Rehovot, Israel.
e-mail: arielkro@post.bgu.ac.il, tirzar@bgu.ac.il, yonina@weizmann.ac.il. This work is partially supported by the  Israeli Ministry of National Infrastructure, Energy and Water Resources.}
		}

	\maketitle

\begin{abstract}
We consider the problem of recovering random graph signals from nonlinear measurements. 
For this setting, closed-form Bayesian estimators are usually intractable and even numerical evaluation may be difficult to compute for large networks. 
In this paper, we propose a graph signal processing (GSP) framework for random graph signal recovery that utilizes information on the structure behind the data. First, we develop the GSP-linear minimum mean-squared-error (GSP-LMMSE) estimator, which minimizes the mean-squared-error (MSE) among estimators that are represented as an output of a graph filter.
The GSP-LMMSE estimator is based on {\em{diagonal}} covariance matrices in the graph frequency domain, and thus, has reduced complexity compared with the LMMSE estimator. This property is especially important when using the sample-mean versions of these estimators that are based on a training dataset.
We then state conditions under which the low-complexity GSP-LMMSE estimator coincides with the optimal LMMSE estimator. 
Next, we develop an approximate parametrization of the GSP-LMMSE estimator by shift-invariant graph filters by solving a weighted least-squares (WLS) problem.
We present three implementations of the parametric GSP-LMMSE estimator for typical graph filters. 
These parametric graph filters  are more robust to outliers and to network topology changes.
In our simulations, we evaluate the performance of the proposed GSP-LMMSE estimators for the problem of state estimation in power systems, which can be interpreted as a graph signal recovery task.
We show that the proposed sample-GSP estimators outperform the sample-LMMSE estimator for a limited training dataset and that the parametric GSP-LMMSE estimators are more robust to topology changes in the form of adding/removing vertices/edges.
\end{abstract}
\begin{IEEEkeywords}
Graph signal processing (GSP), graph filters, 
Bayesian estimation, linear minimum mean-squared-error (LMMSE) estimator, sample-LMMSE estimator,
GSP-LMMSE estimator, graph signal recovery
\end{IEEEkeywords}

\section{Introduction} Graph signals arise in various applications such as the study of brain signals \cite{7544580} and sensor networks \cite{1583238}. The area of graph signal processing (GSP) has gained considerable interest in the last decade. GSP theory extends concepts and techniques from traditional digital signal processing (DSP) to data indexed by generic graphs, including the graph Fourier transform (GFT), graph filter design \cite{8347162,Shuman_Ortega_2013,Isufi_Leus2017}, and sampling and recovery of graph signals \cite{9244650,6854325,7352352,9043719}.
Many modern network applications involve complex models and large datasets and are characterized by nonlinear models \cite{9343697,8347160}, for example, the brain network connectivity \cite{shen2016nonlinear}, environment monitoring \cite{8496842},
and power flow equations in power systems \cite{drayer2018detection,Grotas_2019,shaked2021identification}.
The recovery of graph signals in such networks is often intractable, especially for large networks.
For example, the recovery of voltages from power measurements is an NP-hard nonconvex optimization problem \cite{bienstock2019strong}, which is at the core of power system analysis \cite{Abor}. In this case, the graph represents an electrical network, and the signals are the voltages.
Thus, the development of GSP methods for the estimation of graph signals has significant practical importance, 
in addition to its contribution to the enrichment of theoretical statistical GSP tools.
 
The recovery of random graph signals can be performed by state-of-the-art Bayesian estimators, such as the minimum mean-squared-error (MMSE) and the linear MMSE (LMMSE) estimators.
However, MMSE estimation is often computationally intractable and does not have a closed-form expression in nonlinear models.
The LMMSE can be used when the second-order statistics are completely specified. 
In some cases (e.g. \cite{6891254,7931630,berman2020resource}), accurate characterization of the nonlinear model is possible by using tools such as Bussgang's theorem.
However, for the general case, the distributions (e.g. the covariance matrices) of the desired graph signal and the observations are difficult to determine.
Moreover, in many practical applications, the graph signal has a broad correlation function so that estimating this correlation from data with high accuracy often necessitates a larger sample size than is available \cite{Eldar_merhav} and requires stationarity of the signals.
Low-complexity estimation algorithms have been considered as an alternative. For example, in \cite{Edfors_Borjesson_1996}, a low-rank approximation is applied to the LMMSE estimator by using the singular value decomposition (SVD) of the covariance matrices. 
The dual-diagonal LMMSE (DD-LMMSE) channel estimation algorithm, which is based on the diagonal of the covariance matrices, was proposed in \cite{dd}.
However, these methods may lead to considerable performance loss compared with LMMSE estimation.

Graph filters have been used for many signal processing tasks, such as denoising \cite{7032244,ZHANG20083328}, classification \cite{6778068}, and anomaly detection \cite{drayer2018detection}. 
The design of graph filters to obtain a desired graph frequency response for the general case has been studied and analyzed in various works \cite{6808520,sparse_paper,Isufi_Leus2017,7926424}.
Model-based recovery of a graph signal from noisy measurements by graph filters for {\em{linear}} models was treated in \cite{7117446,7032244,Isufi_Leus2017,7891646}.
To derive classical graph filters, such as the Wiener filter, more restrictive assumptions on the graph signal are required  \cite{7891646}.
Nonlinear graph filters were considered in \cite{8496842}, but they require higher-order statistics that are not completely specified in the general case.
Graph neural network approaches were considered in \cite{Isufi_Ribeiro,gama2020graphs}; however, data-based methods necessitate extensive training sets, and result in nonlinear estimators, while in this paper, we focus on linear estimation with limited training data.

Fitting graph-based models to given data was considered in \cite{7763882,Hua_Sayed_2020,rey2021robust}. 
In \cite{confPaper}, we proposed a two-stage method for  graph signal estimation from a known nonlinear observation model, which is based on fitting a 
 graph-based model and then implementing least-squares recovery on the approximate model.
However, model-fitting approaches aim to minimize the modeling error and in general have significantly lower performance  than  estimators that minimize the estimation error directly.
A fundamental question remains regarding how to use knowledge of the nonlinear physical model and on the graph topology to obtain a low-complexity estimator for the general nonlinear case that has optimal MSE performance.

In this paper, we consider the estimation of random graph signals from a nonlinear observation model. First, we present the sample-LMMSE estimator, in which the analytical expressions of the LMMSE estimator are replaced by their estimated values; this requires use of large training datasets.
Next, we propose the GSP-LMMSE estimator, which minimizes the MSE among the subset of estimators that are represented as an output of a graph filter.
We discuss the advantages of the GSP-LMMSE estimator in terms of complexity and show that 1) for models with diagonal covariance matrices in the graph frequency domain,  the GSP-LMMSE and LMMSE estimators  coincide; and
2) for linear models,
the GSP-LMMSE estimator coincides with the graphical Wiener filter \cite{7891646}.
We develop the MSE-optimal parametrization of the GSP-LMMSE estimator by shift-invariant graph filters that approximate the graph frequency response of the GSP-LMMSE estimator.
The parameterized estimators can be applied to any topology, making them more robust to network topology changes.
We also implement three types of the parameterized GSP-LMMSE estimator based on well-known graph filters.
Finally, we perform numerical simulations for the problem of state estimation in power systems. 
We show that in this case the sample-GSP estimators outperform the sample-LMMSE estimator for a limited training dataset and coincide with the sample-LMMSE estimator otherwise. Moreover, the proposed estimators are more robust to changes in the network topology in the form of adding/removing vertices/edges.

The rest of this paper is organized as follows. In Section \ref{background_sec} we introduce the basics of GSP and three examples of graph filters.
In Section \ref{problem_formulation}, we formulate the estimation problem and present the MMSE and the LMMSE estimators.
In Section \ref{GSP_estimator_section}, we develop the proposed GSP-LMMSE estimator and present parameterizations of the GSP-LMMSE estimator in Section \ref{General_design}.
Simulation are shown in Section \ref{simulation}.
Finally, the paper concluded in Section \ref{conclusion}.

We denote vectors by boldface lowercase letters and matrices by boldface uppercase letters.
The operators $(\cdot)^T$, $(\cdot)^{-1}$, and $(\cdot)^\dagger$ represent the transpose, inverse, and pseudo-inverse, respectively. The notation 
$\circ$ denotes the Hadamard product.
For a matrix $\Amat$, $\text{rank}(\Amat)$ is its rank.
For a vector $\avec$, ${\text{diag}}(\avec)$ is a diagonal matrix whose $i$th diagonal entry is $\avec_i$; when applied to a matrix, ${\text{diag}}(\Amat)$ is a vector collecting the diagonal elements of $\Amat$.
The $m$th element of the vector $\avec$ is written as $a_m$ or $[\avec]_m$. The $(m,q)$th element of the matrix $\Amat$ is written as $A_{m,q}$ or $[\Amat]_{m,q}$.
The identity matrix of dimension $N$ is written as $\Imat_N$ and the vector $\zerovec_N$ is a length $N$ vector of all zeros.
The multivariate Gaussian distribution of $\yvec$ with mean vector, $\muvec$, and covariance matrix, $\Sigmamat$, is denoted by $\yvec \sim \pazocal{N}(\muvec,\Sigmamat)$.
The cross-covariance matrix of the vectors $\avec$ and $\bvec$ is denoted by $\Cmat_{\avec\bvec}\triangleq \EX[(\avec-\EX[\avec]) (\bvec-\EX[\bvec])^T]$.

\section{Background: Graph Signal Processing (GSP)}
\label{background_sec}
We begin by reviewing GSP and general graph filters in Subsection \ref{Background}. Three commonly-used graph filters are then presented in Subsections \ref{LFI_representation}, \ref{ARMA_representation}, \ref{LR_representation}, and will be used later in the paper.
\subsection{GSP background}\label{Background}
Consider an undirected, connected, weighted graph ${\pazocal{G}}({\pazocal{V}},\xi,\Wmat)$, where $\pazocal{V}$ and $\xi$ are sets of vertices and edges, respectively. 
The matrix $\Wmat \in \mathbb{R}^{N \times N}$ is the non-negative weighted adjacency matrix of the graph, where $N \define |\pazocal{V}|$ is the number of vertices in the graph. If there is an edge $(i, j) \in \xi$ connecting vertices $i$ and $j$, the entry $\Wmat_{i,j}$ represents the weight of the edge; otherwise, $\Wmat_{i,j} = 0$. A common way to represent the graph topology is by the Laplacian matrix, which is defined by
\begin{equation}
    \Lmat \define \text{diag}\left(\Wmat\onevec\right) - \Wmat.
\end{equation}
The Laplacian matrix, $\Lmat$, is a real and positive semidefinite matrix and its  eigenvalue decomposition (EVD) is given by
 \begin{equation}
\Lmat=\Vmat\Lambdamat\Vmat^{T},   \label{SVD_new_eq}
 \end{equation}
where $\Lambdamat$ is a diagonal matrix consisting of the eigenvalues of 
$\Lmat$, $0= \lambda_1 < \lambda_2 \leq \ldots \leq \lambda_N $, $\Vmat$ is a matrix whose $n$th column, $\vvec_N$, is the eigenvector of $\Lmat$ that is associated with $\lambda_n$, and $\Vmat^{T}=\Vmat^{-1}$.
We assume, without loss of generality, that ${\pazocal{G}}$ is a connected graph, i.e., that $\lambda_2\neq0$ \cite{Newman_2010}.

In this paper, a {\em{graph signal}} is 
an $N$-dimensional vector, $\avec$, that assigns a scalar value to each vertex, i.e., each entry $a_n$ denotes the signal value at vertix $n$, for $n=1,\ldots, N$.
The GFT of the graph signal $\avec$ is defined as \cite{Shuman_Ortega_2013}
 \begin{equation}
\label{GFT}
\tilde{\avec} \triangleq \Vmat^{T}\avec. 
 \end{equation}
Similarly, the inverse GFT (IGFT) of $\tilde{\avec}$ is given by $\Vmat\tilde{\avec}$.
Finally, a graph signal is a graph-bandlimited signal with cutoff graph frequency $N_s$ if it satisfies \cite{8347162}
\begin{equation} \label{bandlimited_def}
    \tilde{a}_n =0,~ n =N_s+1,\dots,N.
\end{equation}

 Graph filters are useful tools for various GSP tasks. 
Linear and shift-invariant graph filters with respect to (w.r.t.) the graph shift operator (GSO) play essential roles in GSP. 
These filters generalize linear time-invariant filters used in DSP for time series, and enable performing tractable operations over graphs \cite{8347162,Shuman_Ortega_2013}.
A graph filter is a function $f(\cdot)$ applied to a GSO, where here we use the GSO given by $\Lmat$,
that allows an eigendecomposition as follows \cite{8347162}:
 \begin{equation} \label{laplacian_graph_filter}
  	f(\Lmat)= \Vmat f(\Lambdamat)\Vmat^T,
  \end{equation}
 where $f(\Lambdamat)$ is a diagonal matrix.
That is, $f(\lambda_n)$ is the graph frequency response of the filter at graph frequency $\lambda_n$, $n=1,\dots,N$, and $f(\Lmat)$ is  diagonalized by the eigenvector matrix $\Vmat$ of  $\Lmat$.
We assume that the graph filter, $f(\cdot)$, is a well-defined function on the spectrum $\{\lambda_1,\ldots,\lambda_N\}$ of $\Lmat$.

Throughout this paper, we consider three commonly-used parametrizations of the graph filter function, $f(\cdot)$, 
that are appropriate for modeling low-pass graph filters.  For clarity of representation,
 we will write this specific function as $h(\cdot;\alphavec)$, where $\alphavec$ contains the graph filter parameters.
 It should be noted that 
 under simple conditions, all  filters in the form of \eqref{laplacian_graph_filter}  can be represented as a finite polynomial of $\Lmat$ \cite{6638849}.
 Linear graph filters can be implemented locally, e.g., with exchanges of information
among neighbors \cite{6638849}.
 However, due to the nature of the polynomial fitting problem,  linear graph filters usually have limited accuracy when used for approximating a desired graph frequency response (see, e.g.  in \cite{sparse_paper,6808520,7926424}). This is further discussed  in Subsection \ref{GSP_estimator_section},  where the benefits of the presented parametrization in this context are outlined.

\subsection{Linear pseudo-inverse graph filter} \label{LFI_representation}
In a linear graph filter, the filter is a polynomial of a GSO, such as the Laplacian matrix \cite{8347162,Shuman_Ortega_2013}.
In addition,
the Moore-Penrose pseudo-inverse of the Laplacian matrix plays an essential role in many graph-based applications \cite{4072747,4053117}.
Inspired by the linear graph filter,
we define the linear pseudo-inverse graph filter,
$f(\Lmat)=h^{{\text{LPI}}}(\Lmat^\dagger;\alphavec^{\text{LPI}})$, as
\begin{equation}
\label{pseudo_inverse_filter_def}
h^{\text{LPI}}(\Lmat^\dagger;\alphavec^{\text{LPI}})  \define h_{0}\Imat_N+h_{1}\Lmat^\dagger +\ldots+h_{K}(\Lmat^\dagger)^{K},
\end{equation}
where $K$ is the filter order and the filter coefficients vector is 
\begin{equation} \label{LPI_coefficients}
    \alphavec^{\text{LPI}} = 
    \begin{bmatrix}
        h_0,\dots,h_K
    \end{bmatrix}^T
    \in \mathbb{R}^{K+1},
\end{equation}
with $K<N$ due to the Cayley-Hamilton theorem.
From \eqref{laplacian_graph_filter} and \eqref{pseudo_inverse_filter_def}, we conclude that 
the frequency response of the linear pseudo-inverse graph filter at graph frequency $\lambda_n$ can be expressed as
\begin{equation}\label{LPI graph filter}
h^{{\text{LPI}}}(\lambda_n;\alphavec^{\text{LPI}})= 
 \begin{cases}
h_0, &n=1\\
 \sum_{k=0}^K h_k\lambda_n^{-k},  & n=2,\dots,N.
 \end{cases}
 \end{equation}

\subsection{Autoregressive Moving Average (ARMA) graph filter} \label{ARMA_representation} 
Similar to temporal ARMA filters \cite{StatisticalDigitalSignalProcessingHayes}, an
ARMA graph filter is characterized by a rational polynomial in
the Laplacian matrix \cite{Isufi_Leus2017}.
In this case, $f(\Lmat)=h^{{\text{ARMA}}}(\Lmat;\alphavec^{{\text{ARMA}}})$, where the ARMA graph filter is defined as 
\begin{equation} \label{eq:0}
	h^{{\text{ARMA}}}(\Lmat;\alphavec^{{\text{ARMA}}}) \define \left( \Imat_N+\sum_{r=1}^{R} a_{r}\Lmat^{r} \right)^{-1}\sum_{q=0}^Q c_q\Lmat^q
\end{equation}
and the ARMA filter coefficient vector is
\begin{equation} \label{ARMA_coefficients}
    \alphavec^{{\text{ARMA}}} = 
   [\avec^T,
        \cvec^T]^T,
\end{equation}
where $\avec= [1,a_1,\ldots,a_{R}]^T$ and 
$\cvec= [c_0,\ldots,c_Q]^T$.
It is assumed that $\Imat_N+\sum_{r=1}^{R} a_{r}\Lmat^{r}$ is a non-singular matrix.
From \eqref{laplacian_graph_filter} and \eqref{eq:0}, the graph-ARMA filter frequency response at graph frequency $\lambda_n$ can be expressed as
\begin{equation}\label{ARMA graph filter}
 h^{{\text{ARMA}}}(\lambda_n;\alphavec^{{\text{ARMA}}})= \frac{\sum_{q=0}^Qc_q\lambda_n^q}{1 +\sum_{r=1}^{R} a_{r}\lambda_n^{r}},
 \end{equation}
  $ n =1,\dots,N$, where we assume that $1+ \sum_{r=1}^{R}a_{r}\lambda_n^{r} \neq 0$.

The linear graph filter, which is a polynomial of the Laplacian matrix, $\Lmat$:
\begin{equation}
\label{Linear_filter_def}
    h^{\text{lin}}(\Lmat;\alphavec^{\text{lin}})  \define h_{0}\Imat_N+h_{1}\Lmat +\ldots+h_{K}\Lmat^{K},
\end{equation}
where $K$ is the filter order and 
$
    \alphavec^{\text{lin}} = 
    \begin{bmatrix}
        h_0,\dots,h_K
    \end{bmatrix}^T$ is the filter coefficients vector,
can be obtained as a special case of the ARMA graph filter by setting $a_{r}=0$, $r=1,\ldots,R$  and $Q=K$ in \eqref{eq:0}.
However, the ARMA graph filter improves the accuracy of the approximation of the desired graph frequency response and requires fewer coefficients compared to linear graph filters \cite{sparse_paper}.
In addition, unlike the linear pseudo-inverse graph filter, the ARMA graph filter has  distributed implementations, as shown in \cite{sparse_paper,7001054}.

\subsection{Low-rank graph filter}\label{LR_representation}
In many GSP applications, the graph signals are assumed to be {\em{bandlimited}} in the graph spectrum domain \cite{7208894}.
In particular, under the assumption of a  low-frequency graph signal,
 we are interested in filtering only the lowest graph frequencies and equating the output graph signal at high graph frequencies to zero. To this end, we use a low-rank graph filter.

A low-rank graph filter can be designed based on any graph filter, $f(\cdot)$, by substituting the following low-rank matrix:
\begin{equation} \label{L_bar_def}
    \bar{\Lmat} = \sum\nolimits_{n=1}^{N_s} \lambda_n\vvec_n\vvec_n^T,
\end{equation}
which contains the $N_s$ smallest eigenvalues of the Laplacian matrix, $\Lmat$, and their associated eigenvectors, in $f(\cdot)$. That is, we use the filter $f(\bar{\Lmat})$ instead of $f(\Lmat)$. Note that, $\text{rank}(\bar{\Lmat})=N_s-1<N-1$ and that $\bar{\Lmat}$ is not a Laplacian matrix. 
For these filters, we use the convention that the zero power of the matrix $\bar{\Lmat}$ is
\begin{equation} \label{I_bar_def}
    \bar{\Lmat}^0 = \sum_{n=1}^{N_s} \vvec_n\vvec_n^T.
\end{equation}
Substituting \eqref{L_bar_def} in \eqref{laplacian_graph_filter}, the frequency response of a low-rank graph filter at the graph frequency $\lambda_n$ is
\begin{equation}\label{LR graph filter}
 [f(\Lambdamat)]_{n,n}= 
 \begin{cases}
 f(\lambda_n),  &n=1,\dots,N_s\\
0, &n =N_s+1,\dots, N.
 \end{cases}
 \end{equation}
The advantage of the low-rank graph filter is that it requires less filter coefficients and computation of only the $N_s$ smallest eigenvalues and eigenvectors.
As a result, the evaluation of the filter coefficients has lower computational complexity. 

In this paper, we use the low-rank ARMA (LR-ARMA) graph filter,
\begin{equation} \label{LR_ARMA_def}
    f(\bar{\Lmat})=h^{{\text{ARMA}}}(\bar{\Lmat};{\alphavec}^{{\text{LR}}}),
\end{equation} 
where $h^{{\text{ARMA}}}(\cdot;\cdot)$ is defined in \eqref{eq:0} and the reduced, low-rank Laplacian matrix is defined in \eqref{L_bar_def}.
The filter coefficients vector in this case is given by
\begin{equation} \label{LR_ARMA_coefficients}
\begin{split}
    \alphavec^{{\text{LR}}} = 
[
  (\avec^{\text{LR}})^T,
        (\cvec^{\text{LR}})^T]^T,
    \end{split}
\end{equation}
where $\avec^{\text{LR}}= [
  1,a_1^{\text{LR}},\ldots,a_{R}^{\text{LR}}]^T$ and $\cvec^{\text{LR}}=[c_0^{\text{LR}},\ldots,c_Q^{\text{LR}}]^T$.
From \eqref{laplacian_graph_filter}, \eqref{eq:0}, \eqref{ARMA graph filter}, and \eqref{LR graph filter}, the graph-LR-ARMA filter frequency response at graph frequency $\lambda_n$ can be expressed as
\begin{equation} \label{LR_ARMA_frequency_domain}
\begin{split}
h^{{\text{LR-ARMA}}}(\lambda_n;\alphavec^{\text{LR}})
  =\begin{cases}
 \frac{\sum_{q=0}^{Q_{{\text{LR}}}} c_q^{\text{LR}}\lambda_n^q}{1+\sum_{r=1}^{{R}_{{\text{LR}}}}a_{r}^{\text{LR}}\lambda_n^{r}}, & n =1,\dots,N_s\\
 0, &n =N_s+1,\dots, N,
 \end{cases}
 \end{split}
 \end{equation}
 where we assume that $1+ \sum_{r=1}^{R}a_{r}^{\text{LR}}\lambda_n^{r} \neq 0$, $\forall n = 1,\dots,N_s$.
However, it should be noted that 
 distributed implementations of the   graph-LR-ARMA  filter do not exist. Moreover,
the  LR-ARMA filter may not be a suitable parametrization of a graph frequency response that contains nonzero (small) values at high graph frequencies.

\section{Model and problem formulation} \label{problem_formulation}
We now introduce the problem of estimating a random graph signal by observing its noisy nonlinear function, which is determined by the graph topology.
In Subsection \ref{model_sub_sec}, we present the measurement model.
In Subsection \ref{Monte_Carlo_LMMSE_estimator}, the LMMSE and the sample-LMMSE estimators are derived.

\subsection{Model and problem formulation}\label{model_sub_sec}
Consider the problem of recovering a random input graph signal, $\xvec\in\mathbb{R}^N$, based on the following nonlinear measurement model: 
\begin{equation} \label{Model}
	\yvec = \gvec(\Lmat,\xvec) + \wvec,
\end{equation}
where the measurement function, $\gvec:\mathbb{R}^{N \times N} \times{\mathbb{R}}^N \rightarrow {\mathbb{R}}^N$, and the Laplacian matrix, $\Lmat$, which represents the influence of the graph topology, are assumed to be known. 
The noise, $\wvec$, has a known probability density function (pdf), $f_\wvec$, with zero mean and covariance matrix $\Cmat_{\wvec\wvec}$. We assume that the pdf of $\xvec$, $f_\xvec$, is known, and that $\wvec$ and $\xvec$ are  independent.

This model comprises a broad family of statistical estimation problems over graphs. For example, a core problem of power system analysis is the recovery of voltages from power measurements \cite{Abor}, where both the voltages and the powers can be considered as graph signals \cite{routtenberg2020,drayer2018detection,dabush2021state} and the Laplacian matrix is the susceptance matrix. This problem is NP-hard and nonconvex \cite{bienstock2019strong}; It will be discussed further in Section \ref{simulation}.
Similarly, for a water distribution system we are interested in estimating the nodal demands from measurements of the pressure heads and pipe flows using the nonlinear relation between them \cite{letting2017estimation}.
Another example is the problem of estimating a  graph signal 
based
on noisy observations  \cite{alaoui2019computational}. This setting
comprises a broad family of  estimation problems, including group synchronization on
graphs, community detection, and low-rank matrix estimation.
Finally, this model can be extended to other topology matrices, such as the adjacency matrix and the Markov matrix \cite{Newman_2010}.

We are interested in recovering the input graph signal, $\xvec$, from the observation vector $\yvec$ in \eqref{Model}, based on minimizing the MSE. Thus, we seek an estimator of $\xvec$, as follows:
\begin{equation} \label{Distance}
\hat\xvec^{\text{MMSE}} = \argmin_{\hat\xvec(\yvec,\Lmat) \in \mathbb{R}^N}\EX\left[(\hat\xvec(\yvec,\Lmat) - \xvec )^T(\hat\xvec(\yvec,\Lmat) - \xvec)\right].
\end{equation} 
In the general case where the statistics of $\xvec$ and $\yvec$ are known, the MMSE estimator, $\hat\xvec^{\text{MMSE}} = \EX[\xvec|\yvec]$, solves \eqref{Distance}. Computing $\hat\xvec^{\text{MMSE}}$ requires an expression for the posterior pdf of $\xvec$ given $\yvec$, denoted by $f_{\xvec|\yvec}$.
For the considered model, the pdf of $\yvec$ satisfies
\begin{equation}
\label{pdf_x}
	f_{\yvec} = f_{\gvec(\Lmat,\xvec )} \ast f_{\wvec},
\end{equation}
where $\ast $ denotes the convolution operator and $f_{\gvec(\Lmat,\xvec )}$ is the pdf of $\gvec(\Lmat,\xvec )$.
Since $\gvec(\Lmat,\xvec)$ is a nonlinear  function, the pdf of the transformation of $\xvec$ given by $\gvec(\Lmat,\xvec)$ does not have a closed-form expression, and therefore, the pdf of $\yvec$ and the posterior pdf of $\xvec$ given $\yvec$ (i.e., $f_\yvec$ and $f_{\xvec|\yvec}$) do not have analytical closed-form expressions.
As a result, the MMSE estimator does not have a closed-form solution.
Moreover, the computational complexity of the numerical evaluation of the MMSE estimator by multidimensional integration is very high, making the evaluation impractical, especially, for large networks, i.e., large $N$.

\subsection{LMMSE and sample-LMMSE estimators} \label{Monte_Carlo_LMMSE_estimator}
A common sub-optimal approach  is to choose to retain the MMSE criterion, but constrain the subset of estimators.
The LMMSE estimator is the optimal solution to \eqref{Distance} over the subset of estimators that are linear functions of the measurements, $\yvec$, and is given by \cite{Kayestimation}
\begin{equation} \label{LMMSE}
	\hat\xvec^{(\text{LMMSE})} = \EX[\xvec] + \Cmat_{\xvec\yvec}\Cmat_{\yvec\yvec}^{-1}(\yvec - \EX[\yvec]),
\end{equation}
where it is assumed that $\Cmat_{\yvec\yvec}$ is a non-singular matrix. 
The LMMSE estimator can be used when the second-order statistics of $\xvec$ and $\yvec$ are known.
However, in the considered case, the LMMSE estimator is often intractable as well, as an expression for the covariance is not generally known.

Since the pdf of $\xvec$, the statistics of the noise, the Laplacian matrix $\Lmat$, and the measurement function are all known, the {\em{sample-LMMSE}} estimator can be evaluated based on a two-stage procedure
(\cite{6827237}, p. 728 in \cite{van2004optimum}). First, the mean and covariance matrices from \eqref{LMMSE}, $\EX[\yvec]$, $\Cmat_{\xvec\yvec}$, and $\Cmat_{\yvec\yvec}$, are estimated by the sample-means and sample-covariance matrices from a training set\footnote{In our model, $\EX[\xvec]$ is known, and thus, is not replaced by its sample mean.}. Second, these estimators are plugged into the LMMSE estimator in \eqref{LMMSE} to obtain the sample-LMMSE estimator (\cite{6827237}, p. 728 in \cite{van2004optimum}). 

In the first stage, $P$ random samples of $\xvec$, $\{\xvec_p\}_{p=1}^P$, are generated, with a pdf $f_\xvec$. This is the training dataset.
The associated output vectors under the generating model in \eqref{Model} are denoted by $\{\yvec_p\}_{p=1}^P$.
Then, assuming zero-mean noise in \eqref{Model} and using the 
function $\gvec(\cdot,\cdot)$, the sample-mean observation vector from \eqref{Model} is 
\begin{equation}
\label{sample_mean}
	\hat{\yvec} = \frac{1}{P} \sum\nolimits_{p=1}^P\gvec(\Lmat,\xvec_p).
\end{equation}
Similarly, since $\xvec$ and $\wvec$ are assumed to be statistically independent, the sample cross-covariance matrix of $\xvec$ and $\yvec$,
and the sample covariance matrix of $\yvec$
are computed by
\begin{equation} \label{hat_C_theta_x}
	\hat{\Cmat}_{\xvec\yvec} = \frac{1}{P} \sum\nolimits_{p=1}^P (\xvec_p - \EX[\xvec])(\yvec_p - \hat{\yvec})^T
\end{equation}
and
\begin{equation} \label{hat_C_y}
	\hat{\Cmat}_{\yvec\yvec} = \frac{1}{P} \sum\nolimits_{p=1}^P (\yvec_p - \hat{\yvec})(\yvec_p - \hat{\yvec})^T + \Cmat_{\wvec\wvec},
\end{equation}
respectively,
where $\Cmat_{\wvec\wvec}$ is assumed to be known. 
In the second stage, the sample-LMMSE estimator is obtained by plugging the sample-mean and sample covariance matrices from \eqref{sample_mean}, \eqref{hat_C_theta_x}, and \eqref{hat_C_y} into \eqref{LMMSE}, which results in 
\begin{equation} \label{Monte_Carlo_LMMSE}
	\hat\xvec^{(\text{sample-LMMSE})} = \EX[\xvec] + \hat{\Cmat}_{\xvec\yvec}\hat{\Cmat}_{\yvec\yvec}^{-1}(\yvec - \hat{\yvec}).
\end{equation}

The sample-LMMSE estimator requires the computation of the inverse sample covariance matrix of $\yvec$, $\hat{\Cmat}_{\yvec\yvec}$, from \eqref{hat_C_y}. Therefore, a drawback of this method is that it requires an extensive dataset for stable estimation of the inverse sample covariance matrix.
However, the dataset is usually limited in practical applications since the function $\gvec(\xvec,\Lmat)$ may change over time due to changes in the topology.
Moreover, numerical evaluation of the LMMSE may discard information about the relationship between the graph signal and its underlying graph structure. Thus, it is less robust to changes in the graph topology, and there is no straightforward methodology to update the estimator to new topology, which may happen when vertices or edges are added or removed from the network.
Finally, the sample-LMMSE estimator ignores the GSP properties and does not exploit additional information on the graph signal, such as smoothness or graph-bandlimitness, that can improve estimation performance.
On the other hand, existing GSP-based approaches have been developed for simple {\em{linear}}  models (see, e.g. \cite{7891646,Isufi_Leus2017,7032244}).

In order to reduce the computational complexity of the sample-LMMSE estimator, the sample-version of the diagonal-LMMSE estimator can be used \cite{dd}. This estimator minimizes the MSE 
among linear estimators where the estimation matrix that multiplies $\yvec$ is restricted to be diagonal.
By plugging in the sample mean and sample covariance values from \eqref{sample_mean}-\eqref{hat_C_y} in the diagonal-LMMSE estimator (see, e.g. Eq. (18) in \cite{dd}),
 the sample-diagonal-LMMSE estimator is given by
\begin{eqnarray} \label{Monte_Carlo_local_LMMSE}
	\hat\xvec^{(\text{sample-D-LMMSE})} = \EX[\xvec]\hspace{4cm}\nonumber\\ + \text{diag}(\text{diag}(\hat{\Cmat}_{\xvec\yvec}))(\text{diag}(\text{diag}(\hat{\Cmat}_{\yvec\yvec})))^{-1}(\yvec - \hat{\yvec}).
\end{eqnarray}
The use of diagonal matrices in \eqref{Monte_Carlo_local_LMMSE} results in lower computational complexity than that of the sample-LMMSE estimator.
However, the performance of \eqref{Monte_Carlo_local_LMMSE} is significantly poorer than the other methods presented in this paper (as shown in \cite{dd} and was confirmed by our simulations).
This motivates us to seek improved solutions that also involve diagonal matrices and, at the same time, use the given graph.

\section{The GSP-LMMSE estimator} \label{GSP_estimator_section}
In this section, we develop the GSP-LMMSE and the sample-GSP-LMMSE estimators in Subsection \ref{prob_form_sec} and discuss their properties in Subsection \ref{prop_subsection}. Conditions under which the proposed GSP-LMMSE estimator is also the LMMSE estimator are developed in Subsection \ref{separately_model_section}.
\subsection{GSP-LMMSE and sample-GSP-LMMSE estimators}
\label{prob_form_sec}
In this subsection we develop a graph-based linear estimator. 
We consider estimators of the form
\begin{equation} \label{opt_estimator}
	\hat{\xvec} = f(\Lmat)\yvec +\bvec = \Vmat f(\Lambdamat )\Vmat^T\yvec +\bvec ,
\end{equation}
where $\bvec$ is a constant vector,
$\Vmat$ and $\Lambdamat$ are the eigenvector-eigenvalue matrices of the Laplacian matrix,
$\Lmat$, as defined in \eqref{SVD_new_eq}, and $f(\cdot)$ is the graph filter defined in \eqref{laplacian_graph_filter}.
The GSP-LMMSE estimator is an estimator which minimizes the MSE in \eqref{Distance} over the subset of GSP estimators in the form of \eqref{opt_estimator}.
By substituting \eqref{opt_estimator}
 in \eqref{Distance}, the general GSP-LMMSE estimator is defined by
\begin{eqnarray}
\label{LMMSE_GSP_o}
\hat\xvec^{(\text{GSP-LMMSE})}=
 \Vmat \hat{f}(\Lambdamat)\Vmat^T\yvec +\hat{\bvec},
\end{eqnarray}
where
\begin{eqnarray}
\label{Distance_GSP}
\{\hat{f}(\Lambdamat), \hat{\bvec}\}=\hspace{-0.15cm} \argmin_{f(\Lambdamat ) \in {\mathcal{D}}_N,\bvec\in{\mathbb{R}}^N}\EX\left[||\Vmat f(\Lambdamat )\Vmat^T\yvec +\bvec - \xvec ||^2\right]
\end{eqnarray} 
and ${\mathcal{D}}_N$ is the set of diagonal matrices of size $N \times N$.
Since $\Vmat$ is a unitary matrix, i.e., $\Vmat\Vmat^T = \Imat_N$, 
the minimization problem \eqref{Distance_GSP} can be rewritten in the graph frequency domain as follows:
\begin{eqnarray}\label{Distance_GSP_frequency_domain}
\{\hat{f}(\Lambdamat), \hat{\bvec}\}
= \argmin_{f(\Lambdamat ) \in {\mathcal{D}}_N,\tilde{\bvec}\in{\mathbb{R}}^N}\EX\left[|| f(\Lambdamat )\tilde{\yvec} +\tilde{\bvec} - \tilde{\xvec} ||^2\right], 
\end{eqnarray} 
where $\tilde\yvec$, $\tilde\xvec$, and $\tilde\bvec$ are the GFT of $\yvec$, $\xvec$, and $\bvec$, respectively, as defined in \eqref{GFT}.

We first consider the solution for $\tilde{\bvec}$:
\begin{equation} \label{Bayesian_minimize_opt}
\hat{\tilde{\bvec}} = \argmin_{\tilde{\bvec}\in \mathbb{R}^N} \EX\left[|| f(\Lambdamat )\tilde{\yvec} +\tilde{\bvec} - \tilde{\xvec} ||^2\right].
\end{equation}
Since \eqref{Bayesian_minimize_opt} is convex, the optimal solution is obtained by equating the derivative w.r.t. $\tilde{\bvec}$ to zero, which results in
\begin{equation} \label{b_evaluate_old}
	\hat{\tilde{\bvec}} = \EX[\tilde{\xvec}] -  f(\Lambdamat)\EX[\tilde{\yvec}].
\end{equation}
This implies
\begin{equation} \label{b_evaluate_old2}
\hat{\bvec}=\Vmat	\hat{\tilde{\bvec}} = \EX[{\xvec}] -  \Vmat f(\Lambdamat)\Vmat^T \EX[{\yvec}].
\end{equation}
Substituting \eqref{b_evaluate_old} into \eqref{Distance_GSP_frequency_domain} we obtain that the graph frequency response of the GSP-LMMSE estimator is given by
\begin{eqnarray}
 \label{derivative_opt_f_only}
  \hat{f}(\Lambdamat)
  =
  \argmin_{
  f(\Lambdamat)\in {\mathcal{D}}_N}  \EX \left[||f(\Lambdamat )(\tilde{\yvec} - \EX[\tilde{\yvec}]) - \left(\tilde{\xvec} - \EX\left[\tilde{\xvec}\right] \right)||^2\right].
\end{eqnarray}
The solution to \eqref{derivative_opt_f_only} is given by
\begin{equation} \label{opt_f}
    \hat{f}(\Lambdamat) =
    {\text{diag}}(\dvec_{\tilde{\xvec}\tilde{\yvec}})\Dmat_{\tilde{\yvec}\tilde{\yvec}}^{-1},
\end{equation}
where 
\begin{equation} \label{d_def}
	\dvec_{\tilde{\xvec}\tilde{\yvec}} \define {\text{diag}}(\Cmat_{\tilde{\xvec}\tilde{\yvec}}),~~~
	\Dmat_{\tilde{\yvec}\tilde{\yvec}} \define 
	\text{diag}\left(  \text{diag}(\Cmat_{\tilde{\yvec}\tilde{\yvec}} ) \right).
\end{equation}
We assume that $\Dmat_{\tilde{\yvec}\tilde{\yvec}}$ is non-singular.
The GSP-LMMSE estimator is then given by 
\begin{equation}
\label{opt_LMMSE_GSP}
    \hat{\xvec}^{(\text{GSP-LMMSE})}  = \EX[\xvec] +  \Vmat {\text{diag}}(\dvec_{\tilde{\xvec}\tilde{\yvec}})\Dmat_{\tilde{\yvec}\tilde{\yvec}}^{-1}\Vmat^T(\yvec - \EX[\yvec]).
\end{equation}

The GSP-LMMSE estimator from \eqref{opt_LMMSE_GSP} is based on the {\em{diagonal}} of the covariance matrices of $\tilde{\xvec}$ and $\tilde{\yvec}$ that are defined in \eqref{d_def}, which has advantages from a computational point of view, compared with the LMMSE estimator in \eqref{LMMSE} that uses the full covariance matrices of $\xvec$ and $\yvec$. 
The GSP-LMMSE estimator is in fact the diagonal-LMMSE estimator \cite{dd} in the {\em{graph frequency domain}}.
This estimator minimizes the MSE among the linear estimators of $\tilde{\xvec}$, where the estimation matrix that multiplies $\tilde{\yvec}$ is restricted to be a diagonal matrix.

Similar to the sample-LMMSE estimator in Subsection \ref{Monte_Carlo_LMMSE_estimator}, we can use the sample-mean version of the GSP-LMMSE estimator when the statistics are not completely specified.
The idea is to use $P$ random samples of $\xvec$, $\xvec_1,\ldots,\xvec_p$, that are generated from the model in \eqref{Model} to compute the sample-mean vector from \eqref{sample_mean} and the diagonal sample covariance matrices from \eqref{d_def} by
\begin{eqnarray} \label{sample_d}
	\hat{\dvec}_{\tilde{\xvec}\tilde{\yvec}}=
	 \frac{1}{P}\sum_{p=1}^P
	 ( \tilde\xvec_p- \EX[\tilde\xvec])\circ
	 \left( \Vmat^T\gvec(\Lmat,\xvec_p)- \Vmat^T\hat{\yvec}\right)
\end{eqnarray}
and
\begin{eqnarray} \label{sample_D} 
	[\hat{\Dmat}_{\tilde{\yvec}\tilde{\yvec}}]_{i,i} 	 =
    \frac{1}{P} \sum_{p=1}^P\left([\Vmat^T\gvec(\Lmat,\xvec_p)]_i - [\Vmat^T\bar\yvec]_i\right)^2 + [\Cmat_{\tilde{\wvec}\tilde{\wvec}}]_{i,i}, 
\end{eqnarray}
respectively, where $\hat{\Dmat}_{\tilde{\yvec}\tilde{\yvec}}$ is a diagonal matrix, $\tilde{\wvec}$ is the GFT of $\wvec$, and  $\hat{\yvec}$ is defined in \eqref{sample_mean}.
By substituting \eqref{sample_mean}, \eqref{sample_d}, and \eqref{sample_D} in \eqref{opt_LMMSE_GSP}, we obtain the sample-GSP-LMMSE estimator:
\begin{equation} \label{sample_LMMSE_GSP}
    \hat{\xvec}^{(\text{sGSP-LMMSE})}  = \EX[\xvec] +  \Vmat {\text{diag}}(\hat{\dvec}_{\tilde{\xvec}\tilde{\yvec}})\hat{\Dmat}_{\tilde{\yvec}\tilde{\yvec}}^{-1}\Vmat^T(\yvec - \hat{\yvec}),
\end{equation}
 where we assume that 
 $\hat{\Dmat}_{\tilde{\yvec}\tilde{\yvec}}$ is a non-singular matrix.
This approach is summarized in Algorithm \ref{Alg_opt}.
\vspace{-0.25cm}
\begin{algorithm}[hbt]
\SetAlgoLined
	\textbf{Input:}	
	\begin{itemize}
		\item The function $\gvec(\Lmat,\xvec)$.
		\item The Laplacian matrix $\Lmat$.
		\item The distribution of $\xvec$, $f_\xvec$, and its mean, $\EX[\xvec]$. 
		\item The distribution of $\wvec$.
	\end{itemize}
	{\textbf{Algorithm Steps:}}
	\begin{enumerate}
		\setlength\itemsep{0.2em}
		\item Generate $P$ random samples of $\xvec$, with a pdf $f_\xvec$.
		\item Evaluate the sample vectors: $\hat{\yvec}$, $\hat{\dvec}_{\tilde{\xvec}\tilde{\yvec}}$, and $\hat{\Dmat}_{\tilde{\yvec}\tilde{\yvec}}$ from \eqref{sample_mean},
		 \eqref{sample_d}, and
		 \eqref{sample_D}, respectively.
	\end{enumerate}
	\vspace{1mm}
	{\textbf{Output$\colon$}} sample-GSP-LMMSE estimator $\hat\xvec^{(\text{sGSP-LMMSE})} = \EX[\xvec]+ \Vmat\text{diag}(\hat{\dvec}_{\tilde{\xvec}\tilde{\yvec}} ) \hat{\Dmat}_{\tilde{\yvec}\tilde{\yvec}}^{-1}\Vmat^T
	(\yvec - \hat{\yvec}).$
	\caption{Sample-GSP-LMMSE estimator}
	\label{Alg_opt}
\end{algorithm}

\subsection{Advantages and discussion}
\label{prop_subsection}
The GSP representation affords insight into the frequency contents of {\em{nonlinear}} graph signals, such as smoothness and graph-bandlimitness, which can be incorporated into the GSP estimators. In addition, 
a main advantage of the proposed sample-GSP-LMMSE estimator is that it requires the estimation of only two $N$-length vectors that contain the diagonal of the covariance and of the cross-covariance matrices, as described in \eqref{sample_d} and \eqref{sample_D}.
This is in contrast with the sample-LMMSE estimator from \eqref{LMMSE} that requires estimating two $N\times N$ matrices, $\hat{\Cmat}_{\xvec\yvec}$ and $\hat{\Cmat}_{\yvec\yvec}$ from \eqref{hat_C_theta_x} and \eqref{hat_C_y}, respectively.
This advantage improves the sample-GSP-LMMSE estimator performance compared to the sample-LMMSE estimator with  limited datasets used for the non-parametric estimation of the different sample-mean values.
Moreover, the estimation of the inverse sample diagonal covariance matrix, $\hat{\Dmat}_{\tilde{\yvec}\tilde{\yvec}}$ from \eqref{sample_D}, is more robust to limited data (i.e., smaller $P$) than the estimation of the inverse sample covariance matrix, $\hat{\Cmat}_{\yvec\yvec}$ from \eqref{hat_C_y}.
Since $\hat{\Dmat}_{\tilde{\yvec}\tilde{\yvec}}$ is a diagonal matrix, it is a non-singular matrix with probability 1 for any $P\geq 2$ (under the assumption that $\xvec$ is a continuous random variable), while the non-diagonal matrix, $\hat{\Cmat}_{\yvec\yvec}$, requires $P$ to be much larger to be a non-singular matrix.
As a result, in settings where the sample size ($P$) is comparable to the observation dimension ($N$), the sample-LMMSE estimator exhibits severe performance degradation  \cite{4914845,5484583}. This is because the sample covariance matrix is not well-conditioned in the small sample size regime, and inverting it amplifies the estimation error \cite{ledoit2004well}.
 In  the
asymptotic regime where
the observation dimension, $N$, is fixed and $P\rightarrow \infty$, the sample mean and sample covariance matrices  are consistent estimators. Thus, in the asymptotic case, the sample-LMMSE and the sample-GSP-LMMSE estimators converge to the LMMSE and the GSP-LMMSE estimators.

In terms of computational complexity, 
the  sample-LMMSE estimator from \eqref{Monte_Carlo_LMMSE} requires: 
1) forming the sample mean 
and  the sample covariance matrices $	\hat{\Cmat}_{\xvec\yvec}$ and $\hat{\Cmat}_{\yvec\yvec}$ from
\eqref{sample_mean},
\eqref{hat_C_theta_x},
and \eqref{hat_C_y}, respectively,
with full matrix multiplications and an additional cost of $\mathcal{O}(P N^2)$; 2)
computing the inverse of the $N\times N$ matrix $\hat{\Cmat}_{\yvec\yvec}$, which has a complexity of $\mathcal{O}(N^3)$; and  3) performing full matrix multiplications, with a  computational complexity of $\mathcal{O}(N^3)$.
The sample-GSP-LMMSE estimator from \eqref{sample_LMMSE_GSP} requires: 
1) forming the sample mean in
\eqref{sample_mean}
and  the {\em{diagonal}} sample covariance matrices $
	\dvec_{\tilde{\xvec}\tilde{\yvec}} $ and
	$\Dmat_{\tilde{\yvec}\tilde{\yvec}}$ from
\eqref{sample_d} and \eqref{sample_D}, respectively,
with a cost of $\mathcal{O}(P N)$, where the data is generated in the graph frequency domain;
2)
computing the inverse of the diagonal matrix $\hat{\Dmat}_{\tilde{\yvec}\tilde{\yvec}}$, which has a complexity of $\mathcal{O}(N)$;  and 3) performing multiplications of diagonal matrices  with a cost of $\mathcal{O}(N^2)$. The estimator in the vertex domain in \eqref{sample_LMMSE_GSP} requires matrix multiplication by $\Vmat$ and $\Vmat^T$ with a cost of $\mathcal{O}(N^3)$.

The use of the graph frequency domain in the GSP-LMMSE estimator requires the computation of the EVD of the Laplacian matrix, which is of order $\mathcal{O}(N^3)$, and can be computed offline.
If the EVD can be assumed to be known then this task may be avoided.
Recent works propose low-complexity  methods to reduce the complexity of this task (see, e.g. \cite{SVD}). In addition,
the computational complexity of the GSP-LMMSE estimator can be reduced even further 
by using the properties of 
the Laplacian matrix, which tends to be sparse, and therefore,
matrix operations may require fewer computations.

\subsection{Linear optimality conditions} \label{separately_model_section}
In this subsection, we address the question of under which situations the proposed GSP-LMMSE estimator coincides with the LMMSE estimator and with the graphical Wiener filter. 

The following theorem states sufficient and necessary conditions for the GSP-LMMSE and LMMSE estimators to coincide.
\begin{Theorem} \label{claim_coincides}
The GSP-LMMSE estimator coincides with the LMMSE estimator if
\begin{equation} \label{coincides_condition}
 \Cmat_{\tilde{\xvec}\tilde{\yvec}}\Cmat_{\tilde{\yvec}\tilde{\yvec}}^{-1}= 
 	{\text{\normalfont{diag}}}(\dvec_{\tilde{\xvec}\tilde{\yvec}})\Dmat_{\tilde{\yvec}\tilde{\yvec}}^{-1},
\end{equation}
where $\tilde\yvec$ and $\tilde\xvec$ are the GFT of $\yvec$ and $\xvec$, and where $\Cmat_{\tilde{\yvec}\tilde{\yvec}}$ and ${\Dmat}_{\tilde{\yvec}\tilde{\yvec}}$ are non-singular matrices. 
\end{Theorem}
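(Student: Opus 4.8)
The plan is to recast the LMMSE estimator \eqref{LMMSE} into the graph frequency domain defined by the GFT \eqref{GFT}, after which the equivalence with \eqref{coincides_condition} becomes an essentially algebraic statement resting on the orthogonality of $\Vmat$.

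First I would use that $\tilde\xvec=\Vmat^T\xvec$, $\tilde\yvec=\Vmat^T\yvec$, and $\Vmat\Vmat^T=\Vmat^T\Vmat=\Imat_N$ (from \eqref{SVD_new_eq}). Expressing the covariance matrices through their graph-frequency counterparts gives $\Cmat_{\xvec\yvec}=\Vmat\Cmat_{\tilde\xvec\tilde\yvec}\Vmat^T$ and $\Cmat_{\yvec\yvec}=\Vmat\Cmat_{\tilde\yvec\tilde\yvec}\Vmat^T$, so that $\Cmat_{\yvec\yvec}^{-1}=\Vmat\Cmat_{\tilde\yvec\tilde\yvec}^{-1}\Vmat^T$ (the non-singularity of $\Cmat_{\tilde\yvec\tilde\yvec}$ being assumed), and hence $\Cmat_{\xvec\yvec}\Cmat_{\yvec\yvec}^{-1}=\Vmat\,\Cmat_{\tilde\xvec\tilde\yvec}\Cmat_{\tilde\yvec\tilde\yvec}^{-1}\,\Vmat^T$. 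Substituting into \eqref{LMMSE} yields
\[
\hat\xvec^{(\text{LMMSE})}=\EX[\xvec]+\Vmat\,\Cmat_{\tilde\xvec\tilde\yvec}\Cmat_{\tilde\yvec\tilde\yvec}^{-1}\,\Vmat^T(\yvec-\EX[\yvec]),
\]
which has exactly the structure of the GSP-LMMSE estimator \eqref{opt_LMMSE_GSP}, but with $\Cmat_{\tilde\xvec\tilde\yvec}\Cmat_{\tilde\yvec\tilde\yvec}^{-1}$ in place of ${\text{diag}}(\dvec_{\tilde\xvec\tilde\yvec})\Dmat_{\tilde\yvec\tilde\yvec}^{-1}$.

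Next I would note that two affine estimators $\yvec\mapsto\EX[\xvec]+\Amat(\yvec-\EX[\yvec])$ and $\yvec\mapsto\EX[\xvec]+\Bmat(\yvec-\EX[\yvec])$ coincide as functions of $\yvec$ if and only if $\Amat=\Bmat$; the additive constants match automatically once the multiplying matrices do, in accordance with \eqref{b_evaluate_old2}. Applying this with $\Amat=\Vmat\,\Cmat_{\tilde\xvec\tilde\yvec}\Cmat_{\tilde\yvec\tilde\yvec}^{-1}\,\Vmat^T$ and $\Bmat=\Vmat\,{\text{diag}}(\dvec_{\tilde\xvec\tilde\yvec})\Dmat_{\tilde\yvec\tilde\yvec}^{-1}\,\Vmat^T$, and then pre- and post-multiplying the identity $\Amat=\Bmat$ by the invertible matrices $\Vmat^T$ and $\Vmat$, I arrive at the equivalent identity $\Cmat_{\tilde\xvec\tilde\yvec}\Cmat_{\tilde\yvec\tilde\yvec}^{-1}={\text{diag}}(\dvec_{\tilde\xvec\tilde\yvec})\Dmat_{\tilde\yvec\tilde\yvec}^{-1}$, i.e.\ \eqref{coincides_condition}. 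Since every step is reversible, this establishes both the sufficiency stated in the theorem and its converse.

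I do not expect a genuine obstacle here: the argument is a short chain of equivalences relying only on the orthogonality of the GFT matrix $\Vmat$. The single point deserving care is the interpretation of ``coincide'' — it should be read as equality of the two estimators as affine functions of the observation $\yvec$, which under the assumed non-singularity of $\Cmat_{\tilde\yvec\tilde\yvec}$ and $\Dmat_{\tilde\yvec\tilde\yvec}$ is equivalent to the matrix identity \eqref{coincides_condition} by invertibility of $\Vmat$. It is also worth recording, as a remark, that \eqref{coincides_condition} holds in particular whenever $\Cmat_{\tilde\xvec\tilde\yvec}\Cmat_{\tilde\yvec\tilde\yvec}^{-1}$ is itself a diagonal matrix — for instance when the graph-frequency-domain covariance matrices are diagonal — which links the theorem to the special cases treated in the next subsection.
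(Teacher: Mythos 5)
Your proposal is correct and follows essentially the same route as the paper's own proof: both reduce the claim to the matrix identity $\Cmat_{\xvec\yvec}\Cmat_{\yvec\yvec}^{-1}=\Vmat\,{\text{diag}}(\dvec_{\tilde{\xvec}\tilde{\yvec}})\Dmat_{\tilde{\yvec}\tilde{\yvec}}^{-1}\Vmat^T$ and then conjugate by the orthogonal GFT matrix $\Vmat$ to obtain \eqref{coincides_condition}. The only cosmetic difference is that you transform the LMMSE covariances into the graph frequency domain first, whereas the paper compares the two estimators in the vertex domain and moves the $\Vmat$'s afterwards.
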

\begin{proof}
By comparing the r.h.s. of \eqref{LMMSE} and the r.h.s. of \eqref{opt_LMMSE_GSP}, it can be verified that the GSP-LMMSE estimator coincides with the LMMSE estimator if
\begin{equation} \label{coincides_Appendix_to_prove_2}
 \Cmat_{\xvec\yvec}\Cmat_{\yvec\yvec}^{-1}= 
    \Vmat 
    {\text{diag}}(\dvec_{\tilde{\xvec}\tilde{\yvec}})\Dmat_{\tilde{\yvec}\tilde{\yvec}}^{-1}\Vmat^T.
\end{equation}
By right and left multiplication of \eqref{coincides_Appendix_to_prove_2} by $\Vmat^T$ and $\Vmat$, respectively, and using $\Vmat^T\Vmat=\Vmat\Vmat^T=\Imat_N$, we obtain
\begin{equation} \label{coincides_Appendix_to_prove_3}
    \Vmat^T \Cmat_{\xvec\yvec}\Vmat\Vmat^T\Cmat_{\yvec\yvec}^{-1}\Vmat= 
  {\text{diag}}(\dvec_{\tilde{\xvec}\tilde{\yvec}})\Dmat_{\tilde{\yvec}\tilde{\yvec}}^{-1}.
\end{equation}
Using the GFT definition in \eqref{GFT}, the condition in 
\eqref{coincides_Appendix_to_prove_3} can be rewritten as \eqref{coincides_condition}.
\end{proof}
The intuition behind the result in Theorem \ref{claim_coincides} is that if \eqref{coincides_condition} is satisfied, then the LMMSE estimator is also the diagonal LMMSE estimator in the graph frequency domain, i.e., the GSP-LMMSE estimator.
By using the definitions in \eqref{d_def}, it can be verified that a sufficient (but not necessary) condition for \eqref{coincides_condition} to hold is that $\Cmat_{\tilde{\xvec}\tilde{\yvec}}$ and $\Cmat_{\tilde{\yvec}\tilde{\yvec}}$ are diagonal matrices. In this case,  $\Cmat_{\tilde{\xvec}\tilde{\yvec}}={\text{diag}}(	\dvec_{\tilde{\xvec}\tilde{\yvec}} )$ and $\Cmat_{\tilde{\yvec}\tilde{\yvec}}=\Dmat_{\tilde{\yvec}\tilde{\yvec}}$.
In the following Theorems we present two special cases for which $\Cmat_{\tilde{\xvec}\tilde{\yvec}}$ and $\Cmat_{\tilde{\yvec}\tilde{\yvec}}$ are diagonal matrices, and thus, \eqref{coincides_condition} holds.
\begin{Theorem}\label{claim_separately_Model}
The GSP-LMMSE estimator coincides with the LMMSE estimator if the following conditions hold:
	\renewcommand{\theenumi}{C.\arabic{enumi}}
\begin{enumerate}
    \item\label{cond1} The nonlinear measurements function, $\gvec(\Lmat,\xvec)$, is separable in the graph frequency domain (``orthogonal frequencies").
    That is, it satisfies
    \begin{equation} \label{separately_Model}
        [\tilde{\gvec}(\Lmat,\xvec)]_n  = [\tilde{\gvec}(\Lmat,\tilde{x}_n\vvec_n)]_n,~n=1,\ldots,N,
    \end{equation}
    where $\tilde{x}_n$ is the $n$th element of $\tilde{\xvec}$ and $\tilde{\gvec}=\Vmat^T\gvec(\Lmat,\xvec)$.
        \item\label{cond2} The elements of the input graph signal, $\xvec$, are statistically independent in the graph frequency domain.
    \item\label{cond3} The noise vector, $\wvec$, is uncorrelated in the graph frequency domain, i.e., $\Cmat_{\tilde{\wvec}\tilde{\wvec}}$ is a diagonal matrix.
\end{enumerate}
\end{Theorem}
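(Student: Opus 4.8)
The plan is to reduce the claim to Theorem~\ref{claim_coincides}. As observed immediately after that theorem, a sufficient condition for \eqref{coincides_condition} is that $\Cmat_{\tilde{\xvec}\tilde{\yvec}}$ and $\Cmat_{\tilde{\yvec}\tilde{\yvec}}$ are both diagonal, in which case $\Cmat_{\tilde{\xvec}\tilde{\yvec}}={\text{diag}}(\dvec_{\tilde{\xvec}\tilde{\yvec}})$ and $\Cmat_{\tilde{\yvec}\tilde{\yvec}}=\Dmat_{\tilde{\yvec}\tilde{\yvec}}$. Hence it suffices to show that C.1--C.3 force the off-diagonal entries of these two covariance matrices to vanish; the rest is then immediate from Theorem~\ref{claim_coincides}.

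First I would pass to the graph frequency domain. Applying the GFT to \eqref{Model} gives $\tilde{\yvec}=\tilde{\gvec}(\Lmat,\xvec)+\tilde{\wvec}$, and C.1 (equation~\eqref{separately_Model}) collapses the $n$th component to $\tilde{y}_n=[\tilde{\gvec}(\Lmat,\tilde{x}_n\vvec_n)]_n+\tilde{w}_n$, a function of the single scalar $\tilde{x}_n$ plus the single noise coordinate $\tilde{w}_n$. Writing $\phi_n(\tilde{x}_n):=[\tilde{\gvec}(\Lmat,\tilde{x}_n\vvec_n)]_n$, we have $\tilde{y}_n=\phi_n(\tilde{x}_n)+\tilde{w}_n$, with $\EX[\tilde{w}_n]=0$ since $\wvec$ is zero mean and $\tilde{\wvec}=\Vmat^T\wvec$.

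Next I would check the off-diagonal entries. For $m\neq n$, expand $[\Cmat_{\tilde{\xvec}\tilde{\yvec}}]_{m,n}=\EX[(\tilde{x}_m-\EX[\tilde{x}_m])(\phi_n(\tilde{x}_n)-\EX[\phi_n(\tilde{x}_n)]+\tilde{w}_n)]$: the $\tilde{w}_n$ contribution vanishes because $\wvec$ is independent of $\xvec$ with zero mean, and the remaining term factorizes because, by C.2, $\tilde{x}_m$ is independent of $\tilde{x}_n$ and hence of $\phi_n(\tilde{x}_n)$, while $\EX[\tilde{x}_m-\EX[\tilde{x}_m]]=0$. For $\Cmat_{\tilde{\yvec}\tilde{\yvec}}$ with $m\neq n$, I would split $[\Cmat_{\tilde{\yvec}\tilde{\yvec}}]_{m,n}$ into the four covariance terms arising from $\phi_m(\tilde{x}_m)+\tilde{w}_m$ and $\phi_n(\tilde{x}_n)+\tilde{w}_n$: the signal--signal term vanishes by C.2, the two signal--noise cross terms vanish by independence of $\wvec$ and $\xvec$, and the noise--noise term vanishes by C.3. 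Thus both matrices are diagonal, and Theorem~\ref{claim_coincides} yields the result.

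I expect the only point requiring care is the interplay of C.1 with the independence hypotheses: one must verify that \eqref{separately_Model} really does make $\tilde{y}_n$ depend on $\tilde{\xvec}$ only through $\tilde{x}_n$, so that C.2 can be applied coordinate-wise, and then keep straight which independence property (signal--noise independence, across-frequency independence of the signal in C.2, or whitening of the noise in C.3) eliminates each term. All relevant second moments are tacitly assumed finite, so every covariance above is well defined, and no quantitative estimate is needed.
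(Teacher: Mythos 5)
Your proposal is correct and follows essentially the same route as the paper's proof in Appendix A: reduce to Theorem~\ref{claim_coincides} via the sufficient condition that $\Cmat_{\tilde{\xvec}\tilde{\yvec}}$ and $\Cmat_{\tilde{\yvec}\tilde{\yvec}}$ are diagonal, use the independence of $\xvec$ and $\wvec$ to split off the noise terms, use C.1 to make each $[\tilde{\gvec}]_n$ a function of $\tilde{x}_n$ alone so that C.2 factorizes the signal--signal off-diagonal expectations to zero, and invoke C.3 for the noise covariance. The only cosmetic difference is that the paper first writes $\Cmat_{\tilde{\yvec}\tilde{\yvec}}=\Cmat_{\tilde{\gvec}\tilde{\gvec}}+\Cmat_{\tilde{\wvec}\tilde{\wvec}}$ and $\Cmat_{\tilde{\xvec}\tilde{\yvec}}=\Cmat_{\tilde{\xvec}\tilde{\gvec}}$ before checking entries, whereas you expand all four cross terms directly; the substance is identical.
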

\begin{proof}
The proof is given in Appendix \ref{separately_Model_Appendix}.
\end{proof}

\begin{Theorem}\label{claim_graphical_Model}
The GSP-LMMSE estimator coincides with the LMMSE estimator if Condition \ref{cond3} holds and in addition 
	\renewcommand{\theenumi}{C.\arabic{enumi}}
\begin{enumerate}
	 \setcounter{enumi}{3}
    \item\label{cond4} The measurement function, $\gvec(\Lmat,\xvec)$, is the output of  a linear graph filter as defined in \eqref{laplacian_graph_filter}, i.e.,
    \be
    \label{g_filter}
    \gvec(\Lmat,\xvec) = \Vmat f(\Lambdamat)\Vmat^T \xvec.
    \ee
    \item\label{cond5} The covariance matrix of the input graph signal, $\Cmat_{\xvec\xvec}$, is  diagonalizable by the eigenvector matrix of the Laplacian, $\Vmat$, i.e., $\Cmat_{\tilde{\xvec}\tilde{\xvec}}$ is a diagonal matrix.
\end{enumerate}
\end{Theorem}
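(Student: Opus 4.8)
The plan is to reduce everything to Theorem~\ref{claim_coincides}, and in fact to the sufficient condition noted right after its proof: it suffices to show that, under Conditions~\ref{cond3}--\ref{cond5}, both $\Cmat_{\tilde{\xvec}\tilde{\yvec}}$ and $\Cmat_{\tilde{\yvec}\tilde{\yvec}}$ are diagonal matrices. Indeed, if they are, then $\Cmat_{\tilde{\xvec}\tilde{\yvec}}={\text{diag}}(\dvec_{\tilde{\xvec}\tilde{\yvec}})$ and $\Cmat_{\tilde{\yvec}\tilde{\yvec}}=\Dmat_{\tilde{\yvec}\tilde{\yvec}}$ by the definitions in \eqref{d_def}, so \eqref{coincides_condition} holds trivially and Theorem~\ref{claim_coincides} gives the claim.

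First I would move the model \eqref{Model} under assumption \eqref{g_filter} into the graph frequency domain: multiplying $\yvec=\Vmat f(\Lambdamat)\Vmat^T\xvec+\wvec$ on the left by $\Vmat^T$ and using $\Vmat^T\Vmat=\Imat_N$ yields $\tilde{\yvec}=f(\Lambdamat)\tilde{\xvec}+\tilde{\wvec}$, a diagonal linear model in GFT coordinates since $f(\Lambdamat)\in\mathcal{D}_N$. Next I would compute the two covariance matrices directly from this relation. Because $\xvec$ and $\wvec$ are independent and the GFT is a fixed linear map, $\tilde{\xvec}$ and $\tilde{\wvec}$ are independent (in particular have zero cross-covariance), so the cross terms drop and
\[
\Cmat_{\tilde{\xvec}\tilde{\yvec}}=\Cmat_{\tilde{\xvec}\tilde{\xvec}}\,f(\Lambdamat),\qquad
\Cmat_{\tilde{\yvec}\tilde{\yvec}}=f(\Lambdamat)\,\Cmat_{\tilde{\xvec}\tilde{\xvec}}\,f(\Lambdamat)+\Cmat_{\tilde{\wvec}\tilde{\wvec}},
\]
where I used that $f(\Lambdamat)$ is diagonal, hence symmetric. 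By Condition~\ref{cond5}, $\Cmat_{\tilde{\xvec}\tilde{\xvec}}$ is diagonal, and a product of diagonal matrices is diagonal, so $\Cmat_{\tilde{\xvec}\tilde{\yvec}}$ is diagonal; by Condition~\ref{cond3}, $\Cmat_{\tilde{\wvec}\tilde{\wvec}}$ is diagonal, so $\Cmat_{\tilde{\yvec}\tilde{\yvec}}$ is a sum of diagonal matrices and therefore diagonal. Invoking the sufficient condition of Theorem~\ref{claim_coincides} then completes the proof; I would also record that the standing nonsingularity hypothesis on $\Cmat_{\tilde{\yvec}\tilde{\yvec}}$ (equivalently on $\Dmat_{\tilde{\yvec}\tilde{\yvec}}$) is precisely the invertibility assumption required by Theorem~\ref{claim_coincides}.

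There is essentially no obstacle here: the content of Condition~\ref{cond4} is exactly that a linear graph filter is diagonalized by $\Vmat$, so the model decouples across graph frequencies, and Conditions~\ref{cond3} and \ref{cond5} then force the two relevant covariance matrices to inherit that diagonal structure. The only points requiring mild care are phrasing the ``independence is preserved under the GFT'' step correctly (zero cross-covariance of $\tilde{\xvec}$ and $\tilde{\wvec}$ is all that is actually used) and noting the invertibility caveat so that Theorem~\ref{claim_coincides} applies verbatim.
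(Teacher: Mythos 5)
Your proposal is correct and follows essentially the same route as the paper's Appendix B: both pass to the graph frequency domain, use the symmetry of $f(\Lambdamat)$ and the independence of $\xvec$ and $\wvec$ to obtain $\Cmat_{\tilde{\xvec}\tilde{\yvec}}=\Cmat_{\tilde{\xvec}\tilde{\xvec}}f(\Lambdamat)$ and $\Cmat_{\tilde{\yvec}\tilde{\yvec}}=f(\Lambdamat)\Cmat_{\tilde{\xvec}\tilde{\xvec}}f(\Lambdamat)+\Cmat_{\tilde{\wvec}\tilde{\wvec}}$, and then invoke the diagonal sufficient condition for Theorem \ref{claim_coincides}. Your explicit remarks on the nonsingularity hypothesis and on preservation of zero cross-covariance under the GFT are minor additions of care, not a different argument.
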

\begin{proof}
The proof is given in Appendix \ref{graphical_Model_Appendix}.
\end{proof}
The special case in Theorem \ref{claim_graphical_Model} fits the model behind the development of
the graphical Wiener filter \cite{7891646}: First,
Conditions \ref{cond3} and \ref{cond4} imply the linear model assumed in developing the graphical Wiener filter in Subsection V-A in \cite{7891646}.
Second, 
under Condition \ref{cond5}, the signal $ \xvec - \EX[\xvec]$ is a Graph Wide-Sense Stationary (GWSS) signal (see Definition 3 and Theorem 1 in \cite{7891646}), which is the requirement for the graphical Wiener filter.
Therefore, under the conditions of Theorem \ref{claim_graphical_Model}
the GSP-LMMSE estimator coincides with the graphical Wiener filter (Eq. (13) in \cite{7891646}), which is also the LMMSE estimator in this case.
Therefore, the proposed GSP-LMMSE estimator can be interpreted as the graphical Wiener filter, but without assuming a linear model or GWSS signals.

Conditions \ref{cond2} and \ref{cond5} are  common assumptions in the GSP framework \cite{Dong_Vandergheynst_2016,ramezani2019graph}. However, it should be noted  that even if  the elements of the input graph signal, $\xvec$, are statistically independent, they are  not necessarily statistically independent in the graph frequency domain, as required in Condition \ref{cond2}, nor even uncorrelated.
In addition, Condition \ref{cond1} is satisfied if the function $\gvec(\Lmat,\xvec)$ is diagonalized by the eigenvector matrix $\Vmat$ of  $\Lmat$. Condition \ref{cond4}, in which the measurements are obtained as a synthetic output of a graph linear filter as described in \eqref{g_filter},
is a sufficient condition for Condition \ref{cond1}.  Finally, since Condition \ref{cond1} is less restrictive then Condition \ref{cond4}, Condition \ref{cond2} (independency) is more restrictive than Condition \ref{cond5} (decorrelation).

\section{GSP estimators by parametric graph filters} \label{General_design}
In the previous section, we presented a general approach to finding the optimal GSP-LMMSE estimator. However, the GSP-LMMSE estimator is a function of the specific graph structure with fixed dimensions, and thus: 1) it is not optimal when the topology is changed; 2) it is not  adaptive to changes in the number of vertices, $N$, since when vertices are added or removed, 
the dimension of the Laplacian matrix and of the graph signal are changed  and the GSP-LMMSE in \eqref{opt_LMMSE_GSP} is not a valid estimator; and 3) there is no straightforward way to incorporate other graph frequency constraints. Thus, the optimal graph frequency response needs to be redeveloped for any small change in the topology.

In this section, we formulate the problem of designing a graph filter that is MSE-optimal with a specific parametric representation. Then, we develop three implementations of this design by using the three specific graph filters from Section \ref{background_sec}: a linear pseudo-inverse graph filter (Subsection \ref{linear_pseudo_inverse_filter_subsection}), an ARMA graph filter (Subsection \ref{ARMA_filter_subsection}), and a low-rank ARMA graph filter (Subsection \ref{LC_ARMA_filter_subsection}). 
This universal design of graph filters
by fitting the frequency response over a continuous
range of graph frequencies 
based on the graph frequency response is adaptive to topology changes, e.g. in cases when the number of vertices and/or edges is changed. In addition, the parametric representation provides a straightforward way to integrate GSP properties. Finally, since in practice the desired graph frequency response is often approximated by its sample-mean version based on a training dataset, parametrizations can reduce outlier errors and noise effects.

The choice between the different graph filters used in this section can be done according to the trade-off between approximation accuracy, convergence rate, and computational complexity, as detailed in recent literature \cite{Isufi_Leus2017, sparse_paper}. In particular, the parametrization by the linear pseudo-inverse graph filter requires, in general, a higher filter order than the ARMA graph filter, i.e., $K > Q, R$, in order to obtain a good approximation, which may lead to stability problems in large networks. However, finding the ARMA graph filter parameters is based on a nonconvex optimization, while the parameters of the linear pseudo-inverse graph filter have a closed-form solution. 
Using the ARMA filter is known to improve the approximation accuracy and reduce the number of required filter coefficients compared with those of  the linear  (``finite impulse response") filter \cite{sparse_paper}.
The LR-ARMA GSP estimator uses only part of the EVD of $\Lmat$ and requires a lower filter order compared with the ARMA graph filter. However, its performance may be significantly degraded compared to other filters for signals that are not  low-frequency graph signals.

\subsection{General design of the graph frequency response}
\label{General_design_sub}
The graph frequency response of the sample-GSP-LMMSE estimator from \eqref{opt_f} is defined only at graph frequencies $\lambda_n$, $n=1,\dots,N$. 
In this subsection, our goal is to develop a graph-based linear estimator of the form of \eqref{opt_estimator} that minimizes \eqref{Distance}, but where $f(\Lambdamat)$ is restricted to specific parametrizations as a linear and shift-invariant graph filter.
To this end, the MSE-optimal parameter vector, $\alphavec$, for any graph-filter parametrizations, $h(\Lambdamat;\alphavec)$, is found by solving \eqref{derivative_opt_f_only} after substituting the specific parametrizations $f(\Lambdamat)=h(\Lambdamat;\alphavec)$:
\begin{equation} \label{derivative_opt_alpha}
  \hat{\alphavec} = \argmin_{\alphavecsmall\in \Omega_{\alphavecsmall}} 
  \EX \left[||h(\Lambdamat;\alphavec)(\tilde{\yvec} - \EX[\tilde{\yvec}]) - (\tilde{\xvec} - \EX[\tilde{\xvec}]) ||^2\right],
\end{equation}
where $\Omega_{\alphavecsmall}$ is the relevant parameter space, which is defined by the specific choice of graph filter.
This parametric representation of the graph filters interpolates the graph frequency response of the GSP-LMMSE estimator to any graph frequency.
Therefore, when the topology is changed, e.g. by adding/removing vertices, we can substitute the new eigenvalue matrix, $\Lambdamat$, 
in the graph filter to obtain the  approximation, without generating new training data.

The following theorem states the relation between the frequency response of the GSP-LMMSE estimator from \eqref{opt_f} and the optimal parameter vector of a specific parametrization.
\begin{Theorem} \label{claim_equivalent}
The problem in \eqref{derivative_opt_alpha} is equivalent to the following problem:
\begin{eqnarray} \label{filter_coefficients_opt}
    \hat{\alphavec} = 
    \argmin_{\alphavecsmall \in \Omega_{\alphavecsmall}} ||\Dmat_{\tilde{\yvec}\tilde{\yvec}}^{\frac{1}{2}}( 
    \text{diag}(h (\Lambdamat;\alphavec)) - 
     \text{diag}( \hat{f}(\Lambdamat))
    )||^2,
\end{eqnarray}
where $\hat{f}(\Lambdamat)$ and $\Dmat_{\tilde{\yvec}\tilde{\yvec}}$ are given in \eqref{opt_f} and \eqref{d_def}, respectively.
\end{Theorem}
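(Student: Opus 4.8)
The plan is to expand the objective in \eqref{derivative_opt_alpha} by exploiting the fact that $h(\Lambdamat;\alphavec)$ is diagonal, so the expression decouples across graph frequencies. Write $h_n \define [h(\Lambdamat;\alphavec)]_{n,n}$ and $\hat f_n \define [\hat f(\Lambdamat)]_{n,n}$. Since $\|h(\Lambdamat;\alphavec)(\tilde\yvec - \EX[\tilde\yvec]) - (\tilde\xvec - \EX[\tilde\xvec])\|^2 = \sum_{n=1}^N |h_n(\tilde y_n - \EX[\tilde y_n]) - (\tilde x_n - \EX[\tilde x_n])|^2$, taking the expectation gives, for each $n$,
\begin{equation}
\EX\left[|h_n(\tilde y_n - \EX[\tilde y_n]) - (\tilde x_n - \EX[\tilde x_n])|^2\right]
= h_n^2 [\Dmat_{\tilde\yvec\tilde\yvec}]_{n,n} - 2 h_n [\dvec_{\tilde\xvec\tilde\yvec}]_n + [\Cmat_{\tilde\xvec\tilde\xvec}]_{n,n},
\end{equation}
using $[\Dmat_{\tilde\yvec\tilde\yvec}]_{n,n} = \EX[(\tilde y_n - \EX[\tilde y_n])^2]$ and $[\dvec_{\tilde\xvec\tilde\yvec}]_n = \EX[(\tilde x_n - \EX[\tilde x_n])(\tilde y_n - \EX[\tilde y_n])]$ from \eqref{d_def}.

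Next I would complete the square in $h_n$. Because $[\Dmat_{\tilde\yvec\tilde\yvec}]_{n,n} > 0$ (the matrix is assumed non-singular and is a diagonal of variances, hence positive), the per-frequency term equals
\begin{equation}
[\Dmat_{\tilde\yvec\tilde\yvec}]_{n,n}\left(h_n - \frac{[\dvec_{\tilde\xvec\tilde\yvec}]_n}{[\Dmat_{\tilde\yvec\tilde\yvec}]_{n,n}}\right)^2 + \left([\Cmat_{\tilde\xvec\tilde\xvec}]_{n,n} - \frac{[\dvec_{\tilde\xvec\tilde\yvec}]_n^2}{[\Dmat_{\tilde\yvec\tilde\yvec}]_{n,n}}\right).
\end{equation}
By \eqref{opt_f}, $[\hat f(\Lambdamat)]_{n,n} = [\dvec_{\tilde\xvec\tilde\yvec}]_n / [\Dmat_{\tilde\yvec\tilde\yvec}]_{n,n}$, so the first term is exactly $[\Dmat_{\tilde\yvec\tilde\yvec}]_{n,n}(h_n - \hat f_n)^2$, and the second term is a constant independent of $\alphavec$. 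Summing over $n$ and recognizing $\sum_n [\Dmat_{\tilde\yvec\tilde\yvec}]_{n,n}(h_n - \hat f_n)^2 = \|\Dmat_{\tilde\yvec\tilde\yvec}^{1/2}(\text{diag}(h(\Lambdamat;\alphavec)) - \text{diag}(\hat f(\Lambdamat)))\|^2$ gives the claimed equivalence, since dropping an additive constant does not change the $\argmin$.

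The one point requiring a little care is the justification that $[\Dmat_{\tilde\yvec\tilde\yvec}]_{n,n} > 0$ for every $n$, so that $\Dmat_{\tilde\yvec\tilde\yvec}^{1/2}$ is well-defined and real: this follows from the standing non-singularity assumption on $\Dmat_{\tilde\yvec\tilde\yvec}$ together with the fact that each diagonal entry is a variance, hence nonnegative. I also need that $h(\Lambdamat;\alphavec) \in \mathcal{D}_N$ for every admissible $\alphavec$, which holds because all the parametrizations considered are of the form \eqref{laplacian_graph_filter} and hence diagonal in the graph frequency domain; this is what makes the cross terms between distinct frequencies vanish and the decoupling above valid. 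Beyond that, the argument is an elementary ``complete the square'' computation, and I do not anticipate a genuine obstacle.
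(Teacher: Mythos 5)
Your proof is correct. It reaches the same conclusion as the paper's Appendix C, but organizes the computation differently: the paper works at the vector level, adding and subtracting $\hat{f}(\Lambdamat)(\tilde{\yvec}-\EX[\tilde{\yvec}])$ inside the norm and then verifying directly (by substituting \eqref{opt_f}) that the resulting cross term vanishes, after which the surviving term $\EX[\|(h(\Lambdamat;\alphavec)-\hat{f}(\Lambdamat))(\tilde{\yvec}-\EX[\tilde{\yvec}])\|^2]$ is rewritten as the weighted norm in \eqref{filter_coefficients_opt}. You instead exploit the diagonality of $h(\Lambdamat;\alphavec)$ to decouple the MSE across graph frequencies and complete the square in each scalar $h_n$, identifying the vertex of each per-frequency parabola with $[\hat{f}(\Lambdamat)]_{n,n}=[\dvec_{\tilde{\xvec}\tilde{\yvec}}]_n/[\Dmat_{\tilde{\yvec}\tilde{\yvec}}]_{n,n}$ from \eqref{opt_f}. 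The two arguments are mathematically the same expansion of the quadratic objective around its unconstrained diagonal minimizer; your coordinate-wise version is slightly more elementary and makes explicit both why the diagonality of the parametrization is essential (no cross-frequency terms) and what the discarded constant is (the residual MSE of the unconstrained GSP-LMMSE filter), whereas the paper's version avoids introducing per-coordinate notation and extends more directly to non-diagonal weightings. Your side remarks on the positivity of the diagonal entries of $\Dmat_{\tilde{\yvec}\tilde{\yvec}}$ match the paper's standing non-singularity assumption, so there is no gap.
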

\begin{proof}
The proof is given in Appendix \ref{WLS_formulation}.
\end{proof}

From Theorem \ref{claim_equivalent}, the minimization of the MSE for a specific linear and shift-invariant graph filter in \eqref{derivative_opt_alpha} is equivalent to the minimization of the WLS distance between the desired graph frequency response of the GSP-LMMSE estimator
and the graph frequency responses of the chosen graph filter.
Using the solution of \eqref{filter_coefficients_opt}, the h-GSP-LMMSE estimator with a general parameterized filter $h(\cdot)$ is given by
\begin{equation} \label{GSP_estimator}
    \hat{\xvec}^{(\text{h-GSP})}  = \EX[\xvec] +  \Vmat h(\Lambdamat;\hat{\alphavec})\Vmat^T(\yvec - \hat{\yvec}).
\end{equation}

Similar to the implementation of the sample-LMMSE and the sample-GSP-LMMSE estimators, using the sample-mean version of $\hat{f}(\Lambdamat)$, obtained by 
substituting \eqref{sample_d} and \eqref{sample_D} in \eqref{opt_f}, we obtain that the sample-mean version of the optimal graph filter parameters from \eqref{filter_coefficients_opt}:
\begin{eqnarray} \label{filter_coefficients_opt_s}
    \hat{\alphavec}^{\text{sample}} = 
    \argmin_{\alphavecsmall \in \Omega_{\alphavecsmall}} ||\hat{\Dmat}_{\tilde{\yvec}\tilde{\yvec}}^{\frac{1}{2}}\left( 
    \text{diag}(h (\Lambdamat;\alphavec)) - 
    \hat{\Dmat}_{\tilde{\yvec}\tilde{\yvec}}^{-1}\hat{\dvec}_{\tilde{\xvec}\tilde{\yvec}}
    \right)||^2.
\end{eqnarray}
Then, by substituting $h(\Lambdamat;\hat{\alphavec})=h(\Lambdamat;\hat{\alphavec}^{\text{sample}})$ in \eqref{GSP_estimator}, the sample-mean version of the h-GSP-LMMSE estimator with a general parameterized filter, $h(\cdot)$, is given by
\begin{equation} \label{sGSP_estimator}
    \hat{\xvec}^{(\text{sh-GSP})}  = \EX[\xvec] +  \Vmat h(\Lambdamat;\hat{\alphavec}^{\text{sample}})\Vmat^T(\yvec - \hat{\yvec}).
\end{equation}
The sample-h-GSP estimator for a general graph filter is summarized in Algorithm \ref{algorithm_general}.
\vspace{-0.25cm}
\begin{algorithm}[hbt]
\SetAlgoLined
	\textbf{Input:}	
	\begin{itemize}
		\item The function $\gvec(\Lmat,\xvec)$.
		\item The Laplacian matrix $\Lmat$.
		\item The distribution of $\xvec$, $f_\xvec$, and its mean, $\EX[\xvec]$. 
		\item The distribution of $\wvec$.
		\item Parameterized  filter $h(\cdot)$.
	\end{itemize}
	{\textbf{Algorithm Steps:}}
	\begin{enumerate}
		\setlength\itemsep{0.2em}
		\item Generate $P$ random samples of $\xvec$, with a pdf $f_\xvec$.
		\item Evaluate the sample vectors: $\hat{\yvec}$, $\hat{\dvec}_{\tilde{\xvec}\tilde{\yvec}}$, and $\hat{\Dmat}_{\tilde{\yvec}\tilde{\yvec}}$ from \eqref{sample_mean},
		 \eqref{sample_d}, and
		 \eqref{sample_D}, respectively. \label{sample_step}
		\item Compute the optimal graph filter coefficient vector, $\hat{\alphavec}^{\text{sample}}$ by solving \eqref{filter_coefficients_opt_s}. \label{Compute_step}
	\end{enumerate}
	\vspace{1mm}
	{\textbf{Output$\colon$}} sample-h-GSP estimator $\hat{\xvec}^{(\text{sh-GSP})}  = \EX[\xvec] +  \Vmat h(\Lambdamat;\hat{\alphavec}^{\text{sample}})\Vmat^T(\yvec - \hat{\yvec}).$
	\caption{Sample-h-GSP estimator}
	\label{algorithm_general}
\end{algorithm}
\vspace{-0.25cm}

In the following subsections, we present three different GSP estimators for different choices of typical graph filters $h(\cdot;\alphavec)$ and we evaluate their associated optimal parameters.
The difference between the estimators in Subsections \ref{linear_pseudo_inverse_filter_subsection}-
\ref{LC_ARMA_filter_subsection} is that
they are based on different parametrizations and implementations. 
The minimization in \eqref{filter_coefficients_opt} implies that
   one should choose a graph-filter parametrization, $h(\Lambdamat;\alphavec)$, that will be close to the desired graph frequency response, $\hat{f}(\Lambdamat)$, from \eqref{opt_f}.
   Therefore, the  shape of the   desired graph frequency response curve affects the suitable choice of the graph filter parameterization,
   which is application-dependent. 
   The study of $\hat{f}(\Lambdamat)$ 
   is expected to lead to a better understanding of what kinds of filters are useful. 
   This study can be done, for example, based on theoretical properties such as low-pass and high-pass graph filters  \cite{Rama2020Anna} and on a simulation study.  
For example, in 
the simulations in Section \ref{simulation}, $\hat{f}(\Lambdamat)$ behaves as  a low-pass graph filter, and thus, we choose graph filters suitable for this case. The presented framework can be easily extended to other graph filters  for different shapes of $\hat{f}(\Lambdamat)$.

\subsection{Filter 1: sample linear pseudo-inverse GSP estimator}
\label{linear_pseudo_inverse_filter_subsection}
For the linear pseudo-inverse graph filter from Subsection \ref{LFI_representation}, the graph filter from \eqref{pseudo_inverse_filter_def} satisfies
\begin{equation} \label{new h}
	\text{diag}(h^{\text{LPI}}(\Lambdamat^{\dagger};\alphavec^{\text{LPI}})) = \bar\Gammamat_K\alphavec^{\text{LPI}},
\end{equation}
 where $\alphavec^{\text{LPI}}$ is the filter coefficients vector from \eqref{LPI_coefficients} and $\bar\Gammamat_K$ is a $N \times (K+1)$ matrix with elements
\begin{equation} \label{barGammamat}
	[\bar\Gammamat_K]_{i,j} = 
		\begin{cases}
\lambda_i^{-(j-1)}, &\text{for } 2 \leq i \leq N,~ j =1,\dots,K+1\\
1, &\text{for } i=j =1\\
0, &\text{otherwise.} 
\end{cases}
\end{equation}
Therefore, the optimal filter coefficients, $\alphavec^{\text{LPI}}$, are obtained by substituting \eqref{new h} in \eqref{filter_coefficients_opt_s} and removing a constant term, which results in
\begin{equation} \label{Linear pseudo-inverse graph filter objective function}
	\hat{\alphavec}^{\text{LPI}}
	= \argmin_{\alphavecsmall\in \mathbb{R}^{K+1}}
	 \alphavec^T\bar{\Gammamat}_K^T\hat{\Dmat}_{\tilde{\yvec}\tilde{\yvec}}\bar{\Gammamat}_K \alphavec
	 - 2\hat{\dvec}_{\tilde{\xvec}\tilde{\yvec}}^T\bar{\Gammamat}_K\alphavec.
\end{equation}

To avoid overfitting, we replace \eqref{Linear pseudo-inverse graph filter objective function} by the following regularized minimization:
\begin{equation} \label{minimization_LPI_final}
	\hat{\alphavec}^{\text{LPI}}
	= \argmin_{\alphavecsmall\in \mathbb{R}^{K+1}}
	 \alphavec^T\bar{\Gammamat}_K^T\hat{\Dmat}_{\tilde{\yvec}\tilde{\yvec}}\bar{\Gammamat}_K \alphavec
	 - 2\hat{\dvec}_{\tilde{\xvec}\tilde{\yvec}}^T\bar{\Gammamat}_K\alphavec  + \mu \alphavec^T\Mmat_{\text{LPI}}\alphavec,
\end{equation}
where $\mu \geq 0$ is a regularization coefficient and $\Mmat_{\text{LPI}}$ is a positive semidefinite regularization matrix. 
By equating the gradient of \eqref{minimization_LPI_final} w.r.t. $\alphavec$ to zero, we obtain
\begin{equation} \label{LPI_invers_Matrix}
     \hat{\alphavec}^{\text{LPI}} = \left(\bar{\Gammamat}_K^T\hat{\Dmat}_{\tilde{\yvec}\tilde{\yvec}}\bar{\Gammamat}_K + \mu\Mmat_{\text{LPI}} \right )^{-1}\bar{\Gammamat}_K^T
\hat{\dvec}_{\tilde{\xvec}\tilde{\yvec}}.
\end{equation}
In practice, in order to avoid the numerically unstable problem of inverting the $(K+1)\times (K+1)$ matrix, one can solve the convex optimization problem in \eqref{minimization_LPI_final} by using any existing quadratic programming algorithm.
The sample-h-GSP estimator with the linear pseudo-inverse graph filter is implemented by Algorithm \ref{algorithm_general}, where Step \ref{Compute_step} is obtained by evaluating the matrix $\bar\Gammamat$ from \eqref{barGammamat} and then computing $\hat{\alphavec}^{\text{LPI}}$ either by \eqref{LPI_invers_Matrix} or by using a quadratic programming algorithm to solve \eqref{minimization_LPI_final}.

Since the regularization matrix, $\Mmat_{\text{LPI}}$, is a positive semidefinite matrix, 
the optimization problem in \eqref{minimization_LPI_final} is a convex optimization problem and $\hat{\alphavec}^{\text{LPI}}$ is its unique solution, for any $\mu \geq 0$ as long as the matrix $\bar{\Gammamat}_K^T\hat{\Dmat}_{\tilde{\yvec}\tilde{\yvec}}\bar{\Gammamat}_K$ is a positive definite matrix. This condition
holds if $\hat{\Dmat}_{\tilde{\yvec}\tilde{\yvec}}$ is a positive definite matrix and $\text{rank}(\bar{\Gammamat}_K) = K+1$ (see e.g. Chapter 7 in \cite{Horn_Johnson_book}).
It is assumed in this paper that $\Dmat_{\tilde{\yvec}\tilde{\yvec}}$ from 
\eqref{d_def} is non-singular (and thus, positive definite). 
Thus, by taking enough off-line measurements, $\xvec_1,\ldots,\xvec_P$,
the sample covariance matrix
 $\hat\Dmat_{\tilde{\yvec}\tilde{\yvec}}$ is a positive definite matrix as well.
The following claim states a condition for 
$\bar{\Gammamat}_K$ from \eqref{barGammamat} to be full (column) rank. 
\begin{Claim}
\label{claim1}
If there are $K+1$ distinct eigenvalues of the matrix $\Lmat$ such that $\lambda_n \neq \lambda_k$, $\forall n \neq k$,
then, $\text{rank}(\bar{\Gammamat}_K) = K+1 $.
\label{diff_lambda_i_2}
\end{Claim}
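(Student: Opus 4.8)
The plan is to prove the equivalent statement that $\bar{\Gammamat}_K\alphavec = \zerovec_N$ forces $\alphavec = \zerovec_{K+1}$; since $\bar{\Gammamat}_K$ has $K+1$ columns and $K<N$ guarantees $N \ge K+1$ rows, this yields $\text{rank}(\bar{\Gammamat}_K)=K+1$. Writing $\alphavec = [c_0,\dots,c_K]^T$ and reading off the rows of $\bar{\Gammamat}_K$ from \eqref{barGammamat}, the first row $(1,0,\dots,0)$ gives $c_0 = 0$, while for $i = 2,\dots,N$ the $i$th row gives $\sum_{k=0}^{K} c_k \lambda_i^{-k} = 0$. Associating to $\alphavec$ the polynomial $p(t) = \sum_{k=0}^{K} c_k t^k$ of degree at most $K$, the homogeneous system becomes: $p(0) = 0$ together with $p(\lambda_i^{-1}) = 0$ for every $i \ge 2$.

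Next I would invoke the hypothesis. Because $\pazocal{G}$ is connected, $\lambda_1 = 0$ is the only zero eigenvalue and $\lambda_2,\dots,\lambda_N$ are strictly positive. Hence among the $K+1$ distinct eigenvalue values at most one equals $0$, so at least $K$ of them are positive and therefore occur as $\lambda_i$ for some $i \ge 2$; call $K$ such distinct positive values $\mu_1,\dots,\mu_K$. Then $\mu_1^{-1},\dots,\mu_K^{-1}$ are $K$ distinct nonzero roots of $p$, and together with the root $t=0$ supplied by $p(0) = c_0 = 0$ they constitute $K+1$ distinct roots of a polynomial of degree $\le K$. Consequently $p \equiv 0$, i.e. $\alphavec = \zerovec_{K+1}$, as required. (If the $K+1$ distinct eigenvalues do not include $0$, the same argument applies verbatim with $K+1$ distinct positive values $\mu_1,\dots,\mu_{K+1}$, and the relation $p(0)=0$ is not even needed.)

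An equivalent route is to exhibit an invertible $(K+1)\times(K+1)$ submatrix of $\bar{\Gammamat}_K$: take its first row together with $K$ rows indexed by distinct positive eigenvalues, expand the determinant along the first row $(1,0,\dots,0)$, factor $\lambda_i^{-1}$ out of each remaining row, and recognize what is left as a Vandermonde determinant in the variables $\lambda_i^{-1}$, which is nonzero precisely because these values are distinct and nonzero. I would favor the polynomial formulation, as it avoids the bookkeeping with cofactor expansions and the explicit Vandermonde factorization $\prod_{j<k}(\lambda_{i_k}^{-1}-\lambda_{i_j}^{-1})$.

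I do not expect a genuine obstacle here; the only point requiring care is the special role of the graph frequency $\lambda_1 = 0$, at which the pseudo-inverse filter is parametrized through $h_0$ alone (the isolated first row of $\bar{\Gammamat}_K$ in \eqref{barGammamat}) rather than through negative powers of $\lambda_1$. One must therefore count distinct eigenvalues consistently depending on whether $0$ is among the chosen $K+1$ values, and connectivity ($\lambda_2 > 0$) is exactly what ensures $0$ is counted at most once.
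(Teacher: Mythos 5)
Your proof is correct, and it reaches the conclusion by a somewhat different route than the paper. The paper factors $\bar{\Gammamat}_K = \text{diag}(1,\lambda_2^{-K},\dots,\lambda_N^{-K})\,\Bmat_K(\lambda_1,\dots,\lambda_N)$, where the rows of $\Bmat_K$ indexed by $i\geq 2$ are $(\lambda_i^{K},\dots,\lambda_i,1)$; it observes that the diagonal factor is non-singular because the graph is connected ($\lambda_i>0$ for $i\geq2$), so the rank of $\bar{\Gammamat}_K$ equals that of $\Bmat_K$, and then reduces the latter to the rank of the (column-reversed) Vandermonde matrix $\Phivec(N-1,K)$ built on $\lambda_2,\dots,\lambda_N$, citing standard Vandermonde properties. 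Your null-space argument proves the same Vandermonde-type fact from first principles: $\bar{\Gammamat}_K\alphavec=\zerovec_N$ says the degree-$\leq K$ polynomial $p$ with coefficients $\alphavec$ vanishes at $0$ and at the points $\lambda_i^{-1}$, and $K+1$ distinct roots force $p\equiv 0$. What your version buys is an explicit treatment of the special graph frequency $\lambda_1=0$: the paper's appeal to $\Phivec(N-1,K)$ only involves the rows $i\geq 2$, so in the case where $\lambda_1=0$ must be counted as one of the $K+1$ distinct eigenvalues, the rank contribution of the first row $(1,0,\dots,0)$ requires a small additional step that the paper leaves implicit, whereas in your argument that row is precisely the relation $p(0)=c_0=0$ and is absorbed into the root count. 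Conversely, the paper's factorization is shorter once Vandermonde invertibility is taken as known, and it isolates where connectivity is used (invertibility of the diagonal factor), which in your proof appears instead as the statement that the $\mu_j^{-1}$ are well defined and distinct. Both arguments are sound.
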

 \begin{proof}
See Appendix \ref{rank_Appendix}.
\end{proof}

\subsection{Filter 2: sample ARMA GSP estimator} \label{ARMA_filter_subsection}
For the ARMA graph filter from Subsection \ref{ARMA_representation}, the graph filter  \eqref{eq:0} satisfies
\beqna \label{ARMA filter as matrix}
	 \text{diag}(h^{{\text{ARMA}}}(\Lambdamat;\alphavec^{{\text{ARMA}}})) \hspace{3cm}
	 \nonumber\\= \left(\text{diag}(\Phivec(N,R)\avec)\right)^{-1}\Phivec(N,Q)\cvec,
\eeqna
where
$\alphavec^{{\text{ARMA}}}$ is the filter coefficients vector from \eqref{ARMA_coefficients}, 
$\Phivec(N,O)$ is a $N \times (O+1)$ Vandermonde matrix defined by
\begin{equation} \label{Gammamat_def}
	\Phivec(N,O)\define
\left[\begin{array}{cccc}
		1 & \lambda_1 &\dots & \lambda_1^{O} \\
		\vdots & \vdots & \ddots & \vdots\\
		1 & \lambda_N & \dots & \lambda_N^{O}
	\end{array}\right].
\end{equation}
The associated filter coefficients, $\alphavec^{\text{ARMA}} = [\avec^T,\cvec^T]^T$, are obtained by substituting \eqref{ARMA filter as matrix} in \eqref{filter_coefficients_opt_s} and removing a constant term, which results in
\beqna\label{minimization ARMA}
	\left(\hat\avec,\hat\cvec\right) = 
	\argmin_{\substack{\avec\in \mathbb{R}^{R+1}\\\cvec\in \mathbb{R}^{Q+1}}} 
\left\{	\cvec^T\Phivec^T(N,Q)\left(\text{diag}(\Phivec(N,R)\avec)\right)^{-1}\right.\nonumber
\\
\times
	\hat{\Dmat}_{\tilde{\yvec}\tilde{\yvec}} \left(\text{diag}(\Phivec(N,R)\avec)\right)^{-1} \Phivec(N,Q)\cvec \nonumber
	\\
	- 2\hat{\dvec}_{\tilde{\xvec}\tilde{\yvec}}^T  \left(\text{diag}(\Phivec(N,R)\avec)\right)^{-1}\Phivec(N,Q)\cvec  \nonumber\\\left.
	 + \mu\avec^T\Mmat_{\avec}\avec  
	 +\mu\cvec^T\Mmat_{\cvec}\cvec\right\}, 
\eeqna
where $a_0 =1$ and the last two terms are regularization terms that have been added to avoid overfitting, in which
$\mu \geq 0$ is a regularization coefficient, and $\Mmat_{\avec}$ and $\Mmat_{\cvec}$ are positive semidefinite regularization matrices. 
Equating the derivative of \eqref{minimization ARMA} w.r.t. $\cvec$ to zero, results in
\begin{equation} \label{b ARMA}
\begin{split}
\hat{\cvec}^{\text{ARMA}}(\avec) = \hspace{6cm}\\
( \Phivec^T(N,Q)(\text{diag}(\Phivec(N,R)\avec))^{-1}
\hat{\Dmat}_{\tilde{\yvec}\tilde{\yvec}}(\text{diag}(\Phivec(N,R)\avec))^{-1} \\ 
\times \Phivec(N,Q) +\mu\Mmat_{\cvec})^{-1}
\Phivec^T(N,Q)(\text{diag}(\Phivec(N,R)\avec))^{-1}\hat{\dvec}_{\tilde{\xvec}\tilde{\yvec}}.
\end{split}
\end{equation}
By substituting \eqref{b ARMA} in the objective function from \eqref{minimization ARMA}, we obtain 
\begin{eqnarray}\label{minimization_ARMA_new}
	\hat\avec^{\text{ARMA}} 	= \argmax_{\avec\in \mathbb{R}^{R+1}} 
	\hat{\dvec}_{\tilde{\xvec}\tilde{\yvec}}^T(\text{diag}(\Phivec(N,R)\avec))^{-1}\hspace{1cm}
	\nonumber\\
	\times \Phivec(N,Q)
	\hat{\cvec}^{\text{ARMA}}(\avec)
	+ \mu\avec^T\Mmat_{\avec}\avec, 
\end{eqnarray}
where $a_0 =1$.
The optimal $\cvec$ is given by substituting the solution of \eqref{minimization_ARMA_new} in \eqref{b ARMA}, i.e., $\hat{\cvec}^{\text{ARMA}} = \cvec(\hat{\avec}^{\text{ARMA}})$.
Finally, the sample-GSP estimator with the ARMA graph filter is implemented by Algorithm \ref{algorithm_general}, where Step \ref{Compute_step} is obtained by: I. evaluating the matrices $\Phivec(N,R)$ and $\Phivec(N,Q)$ from \eqref{Gammamat_def}; II. computing $\hat{\avec}$ by solving \eqref{minimization_ARMA_new} numerically; and III. Computing $\hat{\cvec}$ by substituting the result of II in \eqref{b ARMA}. In the simulations we used the Matlab function `fminsearch' to approximate \eqref{minimization_ARMA_new}.

Since the regularization matrix, $\Mmat_{\cvec}$, is a positive semidefinite matrix,  \eqref{minimization ARMA} is a convex optimization problem w.r.t. $\cvec$ for any $\mu\geq 0$ as long as  
$\hat{\Dmat}_{\tilde{\yvec}\tilde{\yvec}}$ is  positive definite  and $\text{rank}((\text{diag}(\Phivec(N,R)\avec))^{-1} \Phivec(N,Q)) = Q+1$.
The following claim 
states the condition for this matrix to be  full rank.
\begin{Claim}
\label{claim2}
If there are $Q+1$ distinct eigenvalues of the matrix $\Lmat$ and $[\Phivec(N,R)\avec]_n \neq 0$, $\forall n=1,\dots,N$, then, $\text{rank}((\text{diag}(\Phivec(N,R)\avec))^{-1} \Phivec(N,Q)) = Q+1 $.
\end{Claim}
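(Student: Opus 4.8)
The plan is to reduce the statement to the classical fact that a rectangular Vandermonde matrix built from pairwise distinct nodes has full column rank, exactly as in the argument behind Claim \ref{claim1}. First I would use the hypothesis $[\Phivec(N,R)\avec]_n \neq 0$ for all $n=1,\dots,N$ to note that $\text{diag}(\Phivec(N,R)\avec)$ is a diagonal matrix all of whose diagonal entries are nonzero; hence it is invertible, and so is $(\text{diag}(\Phivec(N,R)\avec))^{-1}$. Since left-multiplication by an invertible matrix does not change the rank of a matrix, we have
\begin{equation*}
\text{rank}\big((\text{diag}(\Phivec(N,R)\avec))^{-1}\Phivec(N,Q)\big) = \text{rank}(\Phivec(N,Q)),
\end{equation*}
so it suffices to show that the $N\times(Q+1)$ matrix $\Phivec(N,Q)$ from \eqref{Gammamat_def} has rank $Q+1$.

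Next I would invoke the assumption that $\Lmat$ possesses at least $Q+1$ distinct eigenvalues. Choosing indices $n_1,\dots,n_{Q+1}$ with $\lambda_{n_1},\dots,\lambda_{n_{Q+1}}$ pairwise distinct and extracting the corresponding rows of $\Phivec(N,Q)$ produces a square $(Q+1)\times(Q+1)$ Vandermonde matrix whose determinant is $\prod_{1\le i<j\le Q+1}(\lambda_{n_j}-\lambda_{n_i})$. Because the selected eigenvalues are distinct this product is nonzero, so this submatrix is nonsingular and therefore $\text{rank}(\Phivec(N,Q)) \ge Q+1$. As $\Phivec(N,Q)$ has only $Q+1$ columns, the rank is exactly $Q+1$, which combined with the rank identity above finishes the proof.

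The argument is essentially routine, so I do not anticipate a genuine obstacle; the only points that need care are (i) verifying that the hypothesis $[\Phivec(N,R)\avec]_n\neq 0$ is precisely what guarantees invertibility of the diagonal left factor, and (ii) correctly invoking rank-invariance under pre-multiplication by an invertible matrix together with the nonvanishing of the Vandermonde determinant for distinct nodes. This parallels the treatment of Claim \ref{claim1}, with the Vandermonde matrix in the powers $\lambda_n^q$ playing the role of the one in the powers $\lambda_n^{-q}$ there.
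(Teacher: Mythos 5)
Your proposal is correct and follows essentially the same route as the paper's proof in Appendix D: invertibility of the diagonal factor from the hypothesis $[\Phivec(N,R)\avec]_n\neq 0$, rank invariance under left-multiplication by a non-singular matrix, and full column rank of the Vandermonde matrix $\Phivec(N,Q)$ when $Q+1$ eigenvalues are distinct. The only difference is that you spell out the Vandermonde determinant explicitly where the paper cites Horn and Johnson, which is a matter of presentation rather than substance.
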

 \begin{proof}
See Appendix \ref{rank_Appendix}.
\end{proof}
Therefore, if $\hat{\Dmat}_{\tilde{\yvec}\tilde{\yvec}}$ is a positive definite matrix and the condition in Claim \ref{claim2} holds,
then the objective function in \eqref{minimization ARMA} is a convex optimization problem w.r.t. $\cvec$ and $\hat{\cvec}^{\text{ARMA}}(\avec)$ is its unique solution. 
It should be noted that the ARMA graph filter involves Vandermonde matrices; 
in order to obtain a stable solution, $R$ and $Q$ should be chosen to have small values \cite{sparse_paper}.

As explained after \eqref{Linear_filter_def},
   the linear graph filter is
 a special case of the ARMA graph filter.
 Thus, 
the optimal coefficients  of the linear graph filter  are obtained by substituting $\avec = a_0 = 1$, $R=0$,   $Q=K$, and $\mu=0$ in 
\eqref{b ARMA}, which results in
\begin{eqnarray}\label{Linear_coefficients_opt}
     \hat{\alphavec}^{\text{lin}} =\hat{\cvec}^{\text{ARMA}}(\avec= 1) \hspace{4.5cm}\nonumber\\=
    \left(\Phivec(N,K)^T\hat{\Dmat}_{\tilde{\yvec}\tilde{\yvec}}\Phivec(N,K) \right )^{-1}\Phivec(N,K)^T
\hat{\dvec}_{\tilde{\xvec}\tilde{\yvec}},
\end{eqnarray}
where $\Phivec(\cdot,\cdot)$ is the  Vandermonde matrix  defined in \eqref{Gammamat_def}.
While all  filter designs  have an equivalent polynomial filter, the matrix $\Phivec(N,K)$ needs to be well-conditioned in order to obtain good estimation by using the filter in \eqref{Linear_filter_def} with the coefficients in  \eqref{Linear_coefficients_opt}. This will only be the case for small graph sizes $N$ and/or small filter orders $K$ \cite{sparse_paper,6808520,7926424}, which leads to limited accuracy of the linear graph filter.

\subsection{Filter 3: sample low-rank ARMA GSP estimator} \label{LC_ARMA_filter_subsection}
For the LR-ARMA graph filter from Subsection \ref{LR_representation}, the graph filter from \eqref{LR_ARMA_def} satisfies
\begin{eqnarray}\label{LC_ARMA_filter_as_matrix}
	 h^{{\text{LR-ARMA}}}(\bar{\Lambdamat};\alphavec^{\text{LR}}) \hspace{5cm}\nonumber\\
	 =\left[
	    ( (\text{diag}(\Phivec(N_s,R)\avec^{{\text{LR}}}))^{-1}\Phivec(N_s,Q)\cvec^{{\text{LR}}})^T, \zerovec_{N-N_s}^T
	\right]^T
	 ,
\end{eqnarray}
where
$\alphavec^{{\text{LR}}}= [\avec^{{\text{LR}}^T},\cvec^{{\text{LR}}^T}]^T$ is the filter coefficients vector from \eqref{LR_ARMA_coefficients}, and 
$\Phivec(\cdot,\cdot)$ is the  Vandermonde matrix  defined in \eqref{Gammamat_def}. 
Let $\mathcal{U} = \{1,\dots,N_s\}$, $\hat{\Dmat}_{{\tilde{\yvec}\tilde{\yvec}}_{\mathcal{U}}}$ denotes the matrix that includes the first $N_s$ rows and columns of $\hat{\Dmat}_{{\tilde{\yvec}\tilde{\yvec}}}$ and $\hat{\dvec}_{{\tilde{\xvec}\tilde{\yvec}}_\mathcal{U}}$ denotes the vector that includes the first $N_s$ elements of $\hat{\dvec}_{{\tilde{\xvec}\tilde{\yvec}}}$.
Then, 
similarly to \eqref{b ARMA}-\eqref{minimization_ARMA_new},
the coefficient vector, $\alphavec^{\text{LR}} $, is obtained by  minimizing \eqref{filter_coefficients_opt_s} with a regularization term after the substitution of \eqref{LC_ARMA_filter_as_matrix}, where $\Dmat_{\tilde{\yvec}\tilde{\yvec}}$ is a diagonal matrix and the last $N-N_s$ entries of \eqref{LC_ARMA_filter_as_matrix} are zero,  
 which results in
\begin{eqnarray} \label{b LR-ARMA}
\hat{\cvec}^{\text{LR}}(\avec) = \left( \Phivec^T(N_s,Q)(\text{diag}(\Phivec(N_s,R)\avec))^{-1}\hat{\Dmat}_{{\tilde{\yvec}\tilde{\yvec}}_{\mathcal{U}}}
\nonumber\right.\\\left.  \times
(\text{diag}(\Phivec(N_s,R)\avec))^{-1}\Phivec(N_s,Q) +\mu\Mmat_{\cvec^{\text{LR}}}
\vphantom{
(\Phivec(N_s,Q))^T(\text{diag}(\Phivec(N_s,R)\avec))^{-1}\Dmat_{\tilde{\yvec}\tilde{\yvec}}(\text{diag}(\Phivec(N_s,R)\avec))^{-1}\Phivec(N_s,Q) 
+\mu\Mmat_{\cvec^{\text{LR}}}
}
\right)^{-1}
\nonumber \hspace{-0.98cm}\\ \times
\Phivec^T(N_s,Q)\left(\text{diag}(\Phivec(N_s,R)\avec)\right)^{-1}
\hat{\dvec}_{{\tilde{\xvec}\tilde{\yvec}}_\mathcal{U}}, 
\end{eqnarray}
and
\begin{eqnarray} \label{minimization LR-ARMA new}
	\hat\avec^{\text{LR}} 	= \argmin_{\avec\in \mathbb{R}^{R+1}} 
	\hat{\dvec}_{{\tilde{\xvec}\tilde{\yvec}}_\mathcal{U}}^T\left(\text{diag}(\Phivec(N_s,R)\avec)\right)^{-1}\Phivec(N_s,Q) \hat{\cvec}^{\text{LR}}(\avec)\hspace{-0.98cm} \nonumber\\
	+ \mu\avec^T\Mmat_{\avec^{\text{LR}}}\avec, \hspace{3cm}
\end{eqnarray}
where $a_0 =1$
and $\mu \geq 0, \Mmat_{\avec^{\text{LR}}}, \Mmat_{\cvec^{\text{LR}}}  0$ are regularization coefficient and positive semidefinite matrices. 
Then, the optimal $\cvec$ is given by substituting the solution of \eqref{minimization LR-ARMA new} in \eqref{b LR-ARMA}, i.e., $\hat{\cvec}^{\text{LR}} = \cvec(\hat{\avec}^{\text{LR}})$.
Similar to Claim \ref{claim2}, the conditions for convexity w.r.t. $\cvec$ can be derived.
Finally, the sample-GSP estimator with the LR-ARMA graph filter is implemented by Algorithm \ref{algorithm_general}, wherein Step \ref{sample_step} evaluates the sample subvectors, $\hat{\yvec}_\mathcal{U}$, $\hat{\dvec}_{{\tilde{\xvec}\tilde{\yvec}}_\mathcal{U}}$, and $\hat{\Dmat}_{{\tilde{\yvec}\tilde{\yvec}}_{\mathcal{U}}}$ from \eqref{sample_mean}, \eqref{sample_d}, and \eqref{sample_D}, respectively; and Step \ref{Compute_step} is obtained by evaluating the matrices $\Phivec(N_s,R)$ and $\Phivec(N_s,Q)$ from \eqref{Gammamat_def} and computing $\hat{\cvec}^{\text{LR}}(\avec)$ and  $\hat{\avec}^{\text{LR}}$ by solving \eqref{b LR-ARMA} and \eqref{minimization LR-ARMA new}, respectively.

\subsection{Advantages and discussion} \label{computational complexity}
An important advantage of the sample linear pseudo-inverse GSP estimator is that its parameters have a closed-form  analytic expression in \eqref{LPI_invers_Matrix}. In contrast, 
the evaluation of the sample ARMA and the LR-ARMA GSP estimators  requires solving nonconvex optimization problems \eqref{minimization_ARMA_new} and \eqref{minimization LR-ARMA new}, respectively. 
On the other hand, since the graph frequency response of the 
linear pseudo-inverse graph filter from \eqref{LPI graph filter} has a  discontinuity  at $\lambda = 0$, it may be unstable when   $\lambda_2$ approaches $0$ due to topology changes.
Since $\lambda_2$ describes the graph connectivity \cite{Newman_2010}, this is only a problem when the network becomes disconnected. 
The relation between the spectrum
of the Laplacian matrix of graphs and the graph  is extensively discussed in the literature (see, e.g. in \cite{oellermann1991laplacian2}). For example, suppose the desired graph frequency response, $\hat{f}(\Lambdamat)$, from \eqref{opt_f}  and the 
graph-filter parametrization, $h(\Lambdamat;\alphavec)$, 
satisfy some smoothness assumptions. In this case, it can be shown by using results from \cite{Isufi_Ribeiro,9206091,gao2021stability,kenlay2021interpretable} that the difference between the optimal graph frequency response {\em{after the change}} and the  graph-filter parametrization, $h(\Lambdamat;\alphavec)$, that is computed with the new $\Lambdamat$, is bounded when the topology change is bounded.

In terms of computational complexity, 
the sample-h-GSP from \eqref{sGSP_estimator}  computes the same expressions as the sample-GSP-LMMSE estimator from \eqref{sample_LMMSE_GSP},  and thus, has the same computational complexity as described in Subsection \ref{prop_subsection},  with an additional complexity  that stems from:
a) performing the matrix multiplications of $\Vmat h(\Lambdamat;\hat{\alphavec}^{\text{sample}})\Vmat^T$, with a  computational complexity of $\mathcal{O}(N^3)$;
and b) the implementation of the specific graph filter.
In detail, 
 implementing the sample linear pseudo-inverse GSP estimator from Subsection \ref{linear_pseudo_inverse_filter_subsection} requires:
1) evaluating the optimal filter coefficients $\hat{\alphavec}^{\text{LPI}}$ from \eqref{LPI_invers_Matrix} by computing the inverse of a $K+1 \times K+1$ matrix, with complexity $\mathcal{O}(K^3)$, where $K \ll N$; and
2) computing the graph frequency response from \eqref{new h} with a matrix-vector multiplication
of $\mathcal{O}(NK)$.
Implementing the sample ARMA and LR-ARMA GSP estimators from Subsections \ref{ARMA_filter_subsection} and \ref{LC_ARMA_filter_subsection}, respectively, requires:
1) evaluating the optimal filter coefficients by solving the nonconvex optimization problem from \eqref{minimization_ARMA_new} and \eqref{minimization LR-ARMA new}, respectively, which has a complexity that depends on the chosen optimization algorithm; and
2) computing the graph frequency response from \eqref{ARMA filter as matrix} and \eqref{LC_ARMA_filter_as_matrix}, respectively, with a matrix-vector multiplication and computing the inverse of the diagonal matrix, with a cost of $\mathcal{O}(N(K+1))$.
Finally, all GSP estimators 
require  the EVD of the Laplacian matrix for computing $\Vmat$. This typically requires a $\mathcal{O}(N^3)$ complexity cost, but several fast computation methods  for spectral
decompositions \cite{le2017approximate,lu2019fast} can be used.

The main advantage of the sample-h-GSP estimators is their low computational complexity needed for updating the estimators when the topology changes. In this case, the updated estimator is evaluated by using the graph filter coefficients, $\hat{\alphavec}^{\text{sample}}$, that were evaluated based on the original topology, with the Laplacian matrix of the new topology.
As a result, the updated sample-h-GSP estimator from \eqref{sGSP_estimator} only requires reevaluation of $h(\Lambdamat;\hat{\alphavec}^{\text{sample}})$, where the graph filter coefficients, $\hat{\alphavec}^{\text{sample}}$, are known, which has a maximum complexity of $\mathcal{O}(N^3)$. 
This is in contrast with the reevaluation needed for the sample-LMMSE and the sample-GSP-LMMSE estimators. 
In addition, the low complexity and distributed implementation of the sample ARMA GSP estimator is described in \cite{sparse_paper,7001054}.

\section{Simulation} \label{simulation}
In this section we evaluate the performance of the proposed GSP-LMMSE estimator from Section \ref{GSP_estimator_section} and the three parametrizations from Subsections \ref{linear_pseudo_inverse_filter_subsection}-\ref{LC_ARMA_filter_subsection} for solving the problem of power system state estimation (PSSE), which is essential for various monitoring purposes 
\cite{Abor}.
The setting of the PSSE problem is presented in Subsection \ref{seeting_sec}.
The different estimation methods that are presented in this section are described in Subsection \ref{Methods}.
The results for stationary networks and for networks with topology changes are presented in Subsections \ref{subsection_statonary}  and \ref{subsection_change}, respectively.

\subsection{Case study: PSSE in electrical networks}
\label{seeting_sec}
A power system can be represented as an undirected weighted graph, ${\pazocal{G}}({\pazocal{V}},\xi)$, where the set of vertices, $\pazocal{V}$, is the set of buses (generators or loads) and the edge set, $\xi$, is the set of transmission lines between these buses.
The measurement vector of the active powers at the buses, $\yvec$,
can be described by the model in \eqref{Model}, with nonlinear measurement function
\begin{eqnarray} \label{g_AC}
\left[\gvec(\Lmat,\xvec)\right]_n\
 \define \sum_{m=1}^N |v_n||v_m|(G_{n,m}\cos(x_n -x_{m})
\nonumber\\+B_{n,m}\sin(x_n -x_{m})),\hspace{1.5cm}
\end{eqnarray}
$n=1,\ldots,N$. Here $x_n$ and $|v_n|$ are the voltage phase and amplitude at the $n$th bus,
and $G_{n,m}$ and $B_{n,m}$ are the conductance and susceptance of the transmission line between the  buses $n$ and $m$ \cite{Abor}, where $(n,m)\in\xi$.
In the graph modeling of the electrical network,
the Laplacian matrix, $\Lmat$, is constructed by using $B_{n,m}$, $n,m=1,\ldots,N$ (see Subsection II-C in \cite{Grotas_2019}).
We assume that $|v_n|=1$, which is a common assumption \cite{Abor}, and $G_{n,m}$ and $B_{n,m}$ are all known.

The goal of PSSE is to recover the state vector, $\xvec$, from the power measurements of $\gvec(\Lmat,\xvec)$, which is known to be a NP-hard problem \cite{bienstock2019strong}.
The input graph signal, $\xvec$, is shown to be smooth \cite{drayer2018detection,dabush2021state}, i.e., its graph smoothness \cite{Shuman_Ortega_2013,8347162}, is small.
Therefore, we model the distribution of the input graph signal, $\xvec$, in the graph frequency domain \cite{Dong_Vandergheynst_2016,ramezani2019graph}, as a smooth Gaussian distribution, as follows:
\begin{equation} \label{theta_distribution_simulation}
		\tilde{\xvec}_{\text{2$\colon$end}} \sim \pazocal{N}(\zerovec,\beta \Lambdamat^{-1}_{\text{2$\colon$end},\text{2$\colon$end}}),
\end{equation}
where $\beta$ is a  smoothness level.
The smooth distribution of $\tilde{\xvec}$ from \eqref{theta_distribution_simulation} implies in particular that the first graph frequency of $\xvec$ satisfies $\tilde{x}_1 =0$.
Finally,
we assume that the noise term, $\wvec$, from the model in \eqref{Model} in this case is zero-mean Gaussian with covariance matrix $\Cmat_{\wvec\wvec}=\sigma^2\Imat_N$.
It can be seen that for this example, Conditions \ref{cond2}, \ref{cond3}, and \ref{cond5} from Theorems \ref{claim_separately_Model} and \ref{claim_graphical_Model} are satisfied, but Conditions \ref{cond1} and \ref{cond4} are not satisfied. It can also be shown that \eqref{coincides_Appendix_to_prove_2} is not satisfied. 
Therefore,  the proposed approach is neither the LMMSE estimator nor the graphical Wiener filter \cite{7891646} in this case.
The values of the different physical parameters in \eqref{g_AC} are taken from the test case of a 118-bus IEEE power system \cite{iEEEdata}, where $N=118$.
The MSE of the different estimators is calculated by performing 10,000 Monte Carlo simulations.

\subsection{Methods} \label{Methods}
In the simulations we compare the performance of the following estimators:\\
1) The sample-LMMSE estimator from  \eqref{Monte_Carlo_LMMSE}.\\
    2) The sample-LMMSE estimator from  \eqref{Monte_Carlo_LMMSE} with a large  $P$ ($P=5 \cdot 10^6 \hspace{-0.15cm}\gg \hspace{-0.15cm}N$),  denoted as $P_\infty$-LMMSE. 
 In  this
asymptotic regime, the sample-LMMSE  estimator converges to the LMMSE  estimator. Since 
the MSE of the LMMSE estimator is  lower than   the MSE of any linear estimator, it can be used as a benchmark 
for a stationary network.\\
3)  The sample-GSP-LMMSE estimator from Algorithm \ref{Alg_opt}.\\
4)  The sample linear pseudo-inverse GSP estimator from Algorithm \ref{algorithm_general} with $K=6$, where Step \ref{Compute_step} is implemented as explained after \eqref{LPI_invers_Matrix}, and the regularization matrix is set to $
        \Mmat_{\text{LPI}} = \text{diag}(\lambda_N^0,\dots,\lambda_N^{K})$ in order to restrict the length of the filter and of the power of the pseudo-inverse of the Laplacian.\\
5) The sample ARMA GSP estimator, 
which is implemented as explained after \eqref{minimization_ARMA_new}, with $R,Q = 3$, $\Mmat_{\avec} = \Imat_{R}$, $\Mmat_{\cvec} = \Imat_Q$. \\
6) The sample LR-ARMA GSP estimator, 
which is implemented as explained after \eqref{minimization LR-ARMA new}, with ${R}_{{\text{LR}}},Q_{{\text{LR}}} = 2$, $\Mmat_{\avec^{\text{LR}}}= \Imat_{{R}_{{\text{LR}}}}$, $\Mmat_{\cvec^{\text{LR}}}= \Imat_{Q_{{\text{LR}}}}$, and the cutoff frequency $N_s = 0.3N$.
    Thus, it  uses only the $30\%$ smallest eigenvalues and their associated eigenvectors of the Laplacian matrix, $\Lmat$.\\
7) The LMMSE estimator evaluated for the linear approximation of the model in \eqref{g_AC}. That is, the nonlinear function from \eqref{g_AC} is linearized by \cite{Abor} 
      $ \gvec(\Lmat,\xvec) \approx \Lmat\xvec$. 
    Then, 
    $\EX[\yvec] = \zerovec$,
    $\Cmat_{\yvec\yvec} = \beta\Lmat + \sigma^2\Imat_N $,
    and
    $ \Cmat_{\xvec\yvec} = \beta\Lmat^{\dagger}\Lmat$,
    resulting in \begin{equation} \label{LS}
	    \hat\xvec^{(\text{aLMMSE})} =\beta \Lmat^\dagger\Lmat(\beta\Lmat + \sigma^2\Imat_N)^{-1}\yvec.
    \end{equation}
    
\subsection{Example A: stationary network}
\label{subsection_statonary}
In this subsection, we investigate the case where the topology is constant and, thus, the statistical properties of $\yvec,\xvec$ are constant.
In Fig. \ref{fig:Diff N}, we present the MSE of the methods from Subsection \ref{Methods} for different values of $P$, i.e., different numbers of training data points used to evaluate the sample-mean values.
It can be seen that the linearized-model based estimator from \eqref{LS} and the lower bound obtained by  $P_\infty$-LMMSE are independent of $P$, as expected.
The sample-LMMSE estimator from \eqref{Monte_Carlo_LMMSE} uses the inverse of the sample covariance matrix $\hat{\Cmat}_{\yvec\yvec}$, which requires a large number of training data points to achieve a stable estimation. Thus,
it can be seen that for $P<10N$, where $N=118$, the MSE of the sample-LMMSE estimator is higher than the MSE of the proposed methods: the sample-GSP-LMMSE, the sample linear pseudo-inverse GSP, and the sample ARMA GSP estimators.

In this example, the sample linear pseudo-inverse GSP  and the sample ARMA GSP estimators coincide with the  sample-GSP-LMMSE estimator.
Thus, the chosen parametrizations are an accurate approximation of the desired graph frequency response. In addition,
for $P>10^4$, these GSP estimators and the  sample-LMMSE estimator converge. It can be seen that the MSE of the LMMSE estimator, represented by $P_\infty$-LMMSE, provides a lower bound on the MSE of any linear estimator, where  the GSP-LMMSE, linear pseudo-inverse GSP, and ARMA GSP estimators achieve this lower bound for a much smaller value of $P$ than the sample-LMMSE estimator. This result holds although Condition \ref{cond1}  from Theorem \ref{claim_separately_Model} is not satisfied and the proposed GSP-LMMSE estimator differs from the LMMSE estimator.
Finally, the sample LR-ARMA GSP estimator has a lower MSE than the sample-LMMSE estimator for $P<3N$ and achieves lower MSE than the LMMSE estimator evaluated for the linear approximation from \eqref{LS} for $P>0.1N$. 
\vspace{-0.35cm}
\begin{figure}[hbt]
  \centering
  \includegraphics[width=0.7\linewidth]{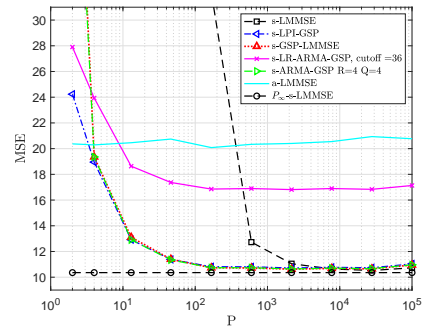}
  \caption{The MSE versus $P$ for the different estimators, where $\sigma^2 = 0.05$ and $\beta = 3$.}
  \label{fig:Diff N}
  \vspace{-0.5cm}
\end{figure}

In Fig. \ref{fig:Diff sigma}, we present the MSE versus the noise variance, $\sigma^2$ for $P=500$ and $\beta=3$. It can be seen that the MSE of all estimators (except the aLMMSE from \eqref{LS}, which is based on a linearization of the model) increases as the noise variance increases. 
In this case, the sample-GSP-LMMSE, the sample linear pseudo-inverse GSP, and the sample ARMA GSP estimators outperform the sample-LMMSE estimator and approach the lower bound obtained by  $P_\infty$-LMMSE for all values of the noise variance.
\vspace{-0.25cm}
\begin{figure}[hbt]
  \centering
  \includegraphics[width=0.7\linewidth]{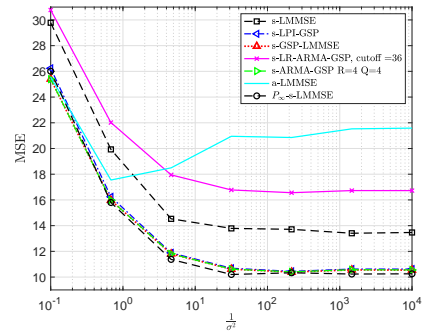}
  \caption{The MSE versus $\frac{1}{\sigma^2}$ for the different estimators, where $P = 500$ and $\beta = 3$.}
  \label{fig:Diff sigma}
  \vspace{-0.25cm}
\end{figure}

In order to demonstrate the 
complexity of the  estimators empirically,
 the average computation time was evaluated  using Matlab on an Intel Core(TM) i7-7700K CPU computer, 4.2 GHz. In Fig. \ref{fig:Times}, we present the runtime of  the different estimators versus the target MSE, where $\sigma^2 = 0.05$ and $\beta = 3$. 
It can be seen that the runtime of all the estimators increases as the target MSE decreases. 
The runtimes of the sample ARMA GSP and the sample LR-ARMA GSP  estimators are the highest since finding their optimal coefficients  requires  solving a nonconvex optimization problem   (see Subsections \ref{ARMA_filter_subsection} and \ref{LC_ARMA_filter_subsection}, respectively), while the LR-ARMA has a lower runtime since it has  fewer coefficients.
The proposed sample-GSP-LMMSE estimator has the lowest runtime for any target MSE, and the proposed sample linear pseudo-inverse GSP estimator is the second-best in terms of runtime.

It should be noted that the topology is stationary in this simulation. Thus, the EVD of the Laplacian matrix, $\Lmat$, is assumed to be known and given in advance.
When there is a change in the network, the 
computational complexity of updating the sample-h-GSP estimators is much lower than those of the other estimators and, thus, has a shorter runtime.
This is since the sample-h-GSP estimators use the graph filter coefficients, $\hat{\alphavec}^{\text{sample}}$, that were evaluated on the initial topology. In contrast,  the sample-LMMSE and the sample-GSP-LMMSE estimators require a reevaluation  for each new topology, as shown in the following subsection.
  \vspace{-0.4cm}
\begin{figure}[hbt]
  \centering
  \includegraphics[width=0.75\linewidth]{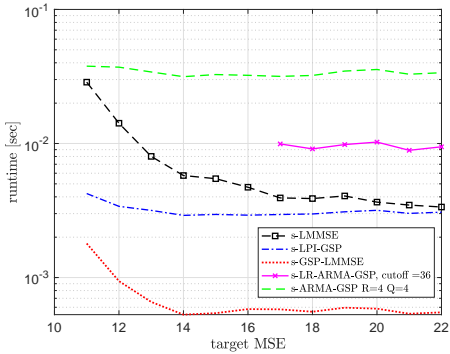}
  \caption{The runtime for the different estimators versus the target MSE, where $\sigma^2 = 0.05$ and $\beta = 3$.}
  \label{fig:Times}
  \vspace{-0.5cm}
\end{figure}

\subsection{Example B: estimation under topology changes}
\label{subsection_change}
\vspace{-0.1cm}
In this subsection, we discuss the case where the underlying topology changes over time. 
For example, when a sensor fails or changes its location, the sensor network's topology changes.
Similarly, the power grid topology may be changed by failure, opening and closing of switches on power lines, and the presence of new loads and generators.
When $\Lmat$ changes, the measurement function, $\gvec(\Lmat,\xvec)$, from the model in \eqref{Model} and the distribution from \eqref{theta_distribution_simulation}, change.
Our goal is to estimate $\xvec$ without generating new dataset.
The MSE of the different estimators is considered under random changes in the topology with the constraints that the graph will remain well-connected, i.e., assuming that $\lambda_2$, which is related to the connectivity \cite{Newman_2010} did not reduce significantly. 
The MSE shown is the average MSE over 100 random changes on the graph.

\subsubsection{Estimation under edge changes}
In this case, the changes in the topology are due to the addition or removal of edges. Thus, the problem dimension did not change.
In order to evaluate all the estimators from Section \ref{Methods} we use the mean of $\xvec$, i.e., $\EX[\xvec]$, the eigenvalue and eigenvectors of the Laplacian matrix with the historical sample values, such as $\hat{\Cmat}_{\yvec\yvec}$ from \eqref{hat_C_y}, $\hat{\Dmat}_{\tilde{\yvec}\tilde{\yvec}}$ from \eqref{sample_D}.
 It should be noted that in the following, the sample-LMMSE estimator is based on the initial topology, where the sample-GSP-LMMSE, the sample linear pseudo-inverse GSP,  and the sample ARMA GSP estimators have been updated to the new topology, as described after \eqref{derivative_opt_alpha}.

Figures \ref{fig:frequency_response} and \ref{fig:frequency_response_7_new} present the graph frequency response  of the different sample-GSP  estimators, where $P = 500$, $\sigma^2 = 0.05$ and $\beta = 3$ for the stationary network from Example A (Fig.  \ref{fig:frequency_response}) and for the topology change from Example B, where   $M = 7$ new edges were added  (Fig. \ref{fig:frequency_response_7_new}).
It can be seen that in both cases, all  GSP filters
achieve almost the same graph frequency response as  the sample-GSP-LMMSE estimator and, thus, they can be considered as robust to topology changes.
The graph frequency response of the sample GSP-LMMSE at $\lambda_1 $ is $0$, which is approximated by the h-GSP filters as a small value.
However, it can be seen that the graph frequency response is a nonzero (small) response (i.e. the output $\yvec$ is not a perfect graph low-frequency signal) while the graph frequency response of the LR-ARMA GSP estimator is absolutely zero for $\lambda>36$. Therefore, the LR-ARMA GSP estimator  does not perform well in the simulations.
For perfectly low-frequency signals (not shown here due to space limitations), the LR-ARMA GSP estimator achieves the same performance as the other h-GSP estimators.
In addition, it can be seen in Fig. \ref{fig:frequency_response_7_new} that the graph frequency response of the sample-GSP-LMMSE estimator that  was evaluated based on the initial topology (red)  is a less accurate approximation of the optimal graph frequency response. Finally, 
the desired graph frequency response, $\hat{f}(\Lambdamat)$, includes a sharp transition since $f(\lambda_1)=0$; thus, a linear graph filter as in  \eqref{Linear_filter_def} is inappropriate for this case  since it requires a high filter order 
leading to a high implementation cost and limited accuracy \cite{sparse_paper}.  

\begin{figure}[hbt]
    \centering
	\subcaptionbox{\label{fig:frequency_response}}[\linewidth]
	{ \includegraphics[width=0.75\linewidth]{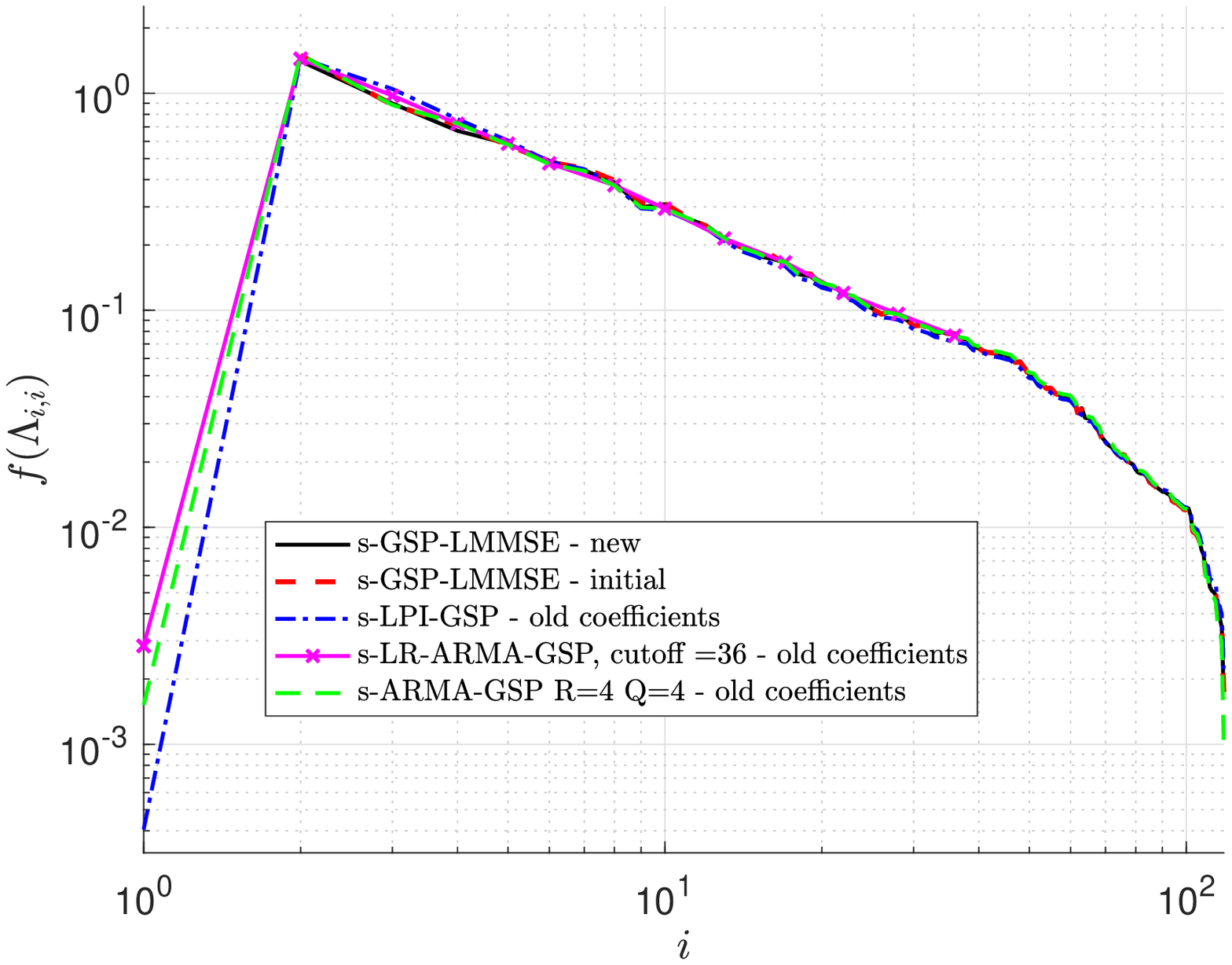}\vspace{-0.2cm}}
	\subcaptionbox{\label{fig:frequency_response_7_new}}[\linewidth]
	{\includegraphics[width=0.75\linewidth]{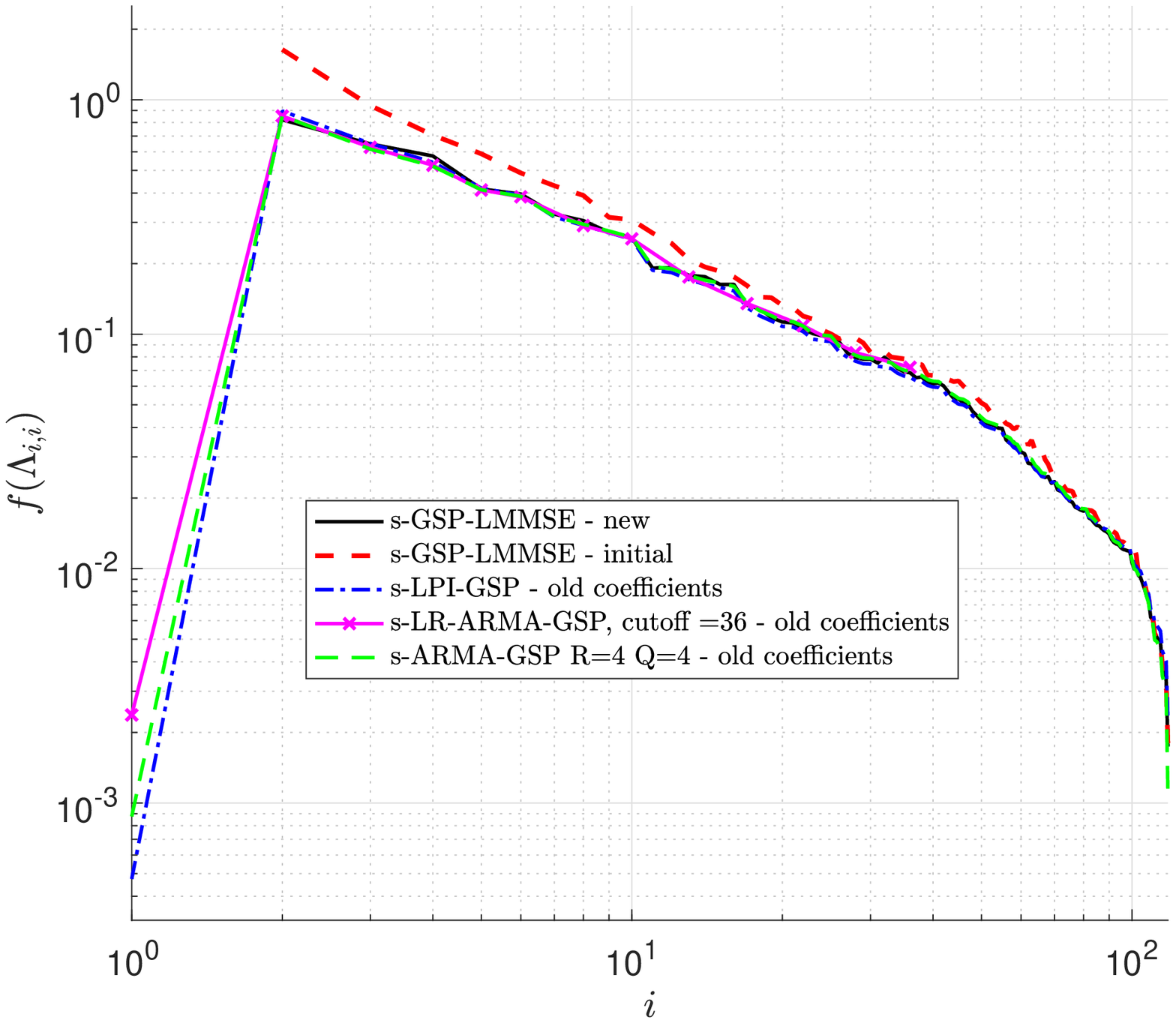}\vspace{-0.2cm}} 
	\caption{
 The graph frequency response of the different estimators, where $P = 500$, $\sigma^2 = 0.05$ and $\beta = 3$, for (a) a stationary network; and
	(b) a network with the addition of $M=7$ edges. }
\end{figure}

Figures \ref{fig:add edge} and \ref{fig:remove edge} present the case where the  sample-mean values were calculated from the dataset which is  evaluated on a topology before $M$ edges are added or removed.
Since there is no straightforward methodology to update the sample-LMMSE estimator to the new topology, its performance in the sense of the MSE is impaired for both cases: added and removed edges. 
Moreover, even when a small number of new edges is added, the MSE of the sample-LMMSE significantly increases.
It can be seen that even for $2$ new edges, the sample LR-ARMA GSP estimator has a lower MSE than the sample-LMMSE estimator.
The sample linear pseudo-inverse GSP  and the sample ARMA GSP estimators
have a lower MSE for any number of edges added or removed.
The sample-GSP-LMMSE estimator that has been updated to the new topology, has the same MSE as the sample linear pseudo-inverse GSP estimator and the sample ARMA GSP estimator for a small number of edges added or removed, and a slightly higher MSE for a large number of edges added or removed.
The aMMSE estimator performance increases for edges removed and improves for the case where new edges are added.
It should be noted that in addition to their advantage in terms of MSE, the computational complexity of updating the sample-h-GSP estimators is  lower than those of the sample-LMMSE and the sample-GSP-LMMSE estimators. 
This is
since the sample-h-GSP estimators use the graph filter coefficients, $\hat{\alphavec}^{\text{sample}}$, that were evaluated on the initial topology, while the sample-LMMSE and the sample-GSP-LMMSE estimators 
require to reevaluate the  estimators from scratch for the new topology.
\begin{figure}[hbt]
    \centering
	\subcaptionbox{\label{fig:add edge}}[\linewidth]
	{ \includegraphics[width=0.75\linewidth]{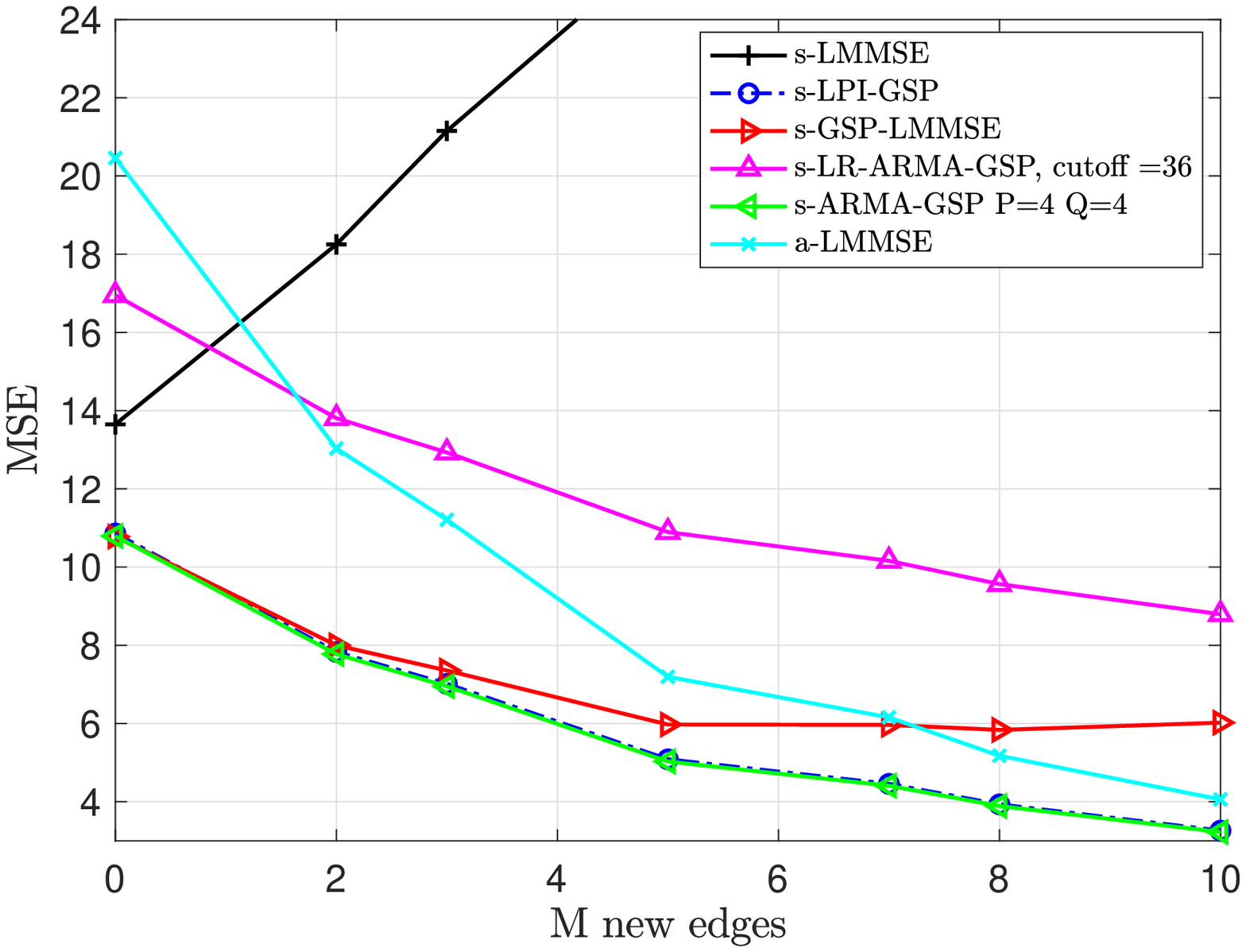}\vspace{-0.2cm}}
	\subcaptionbox{\label{fig:remove edge}}[\linewidth]
	{\includegraphics[width=0.75\linewidth]{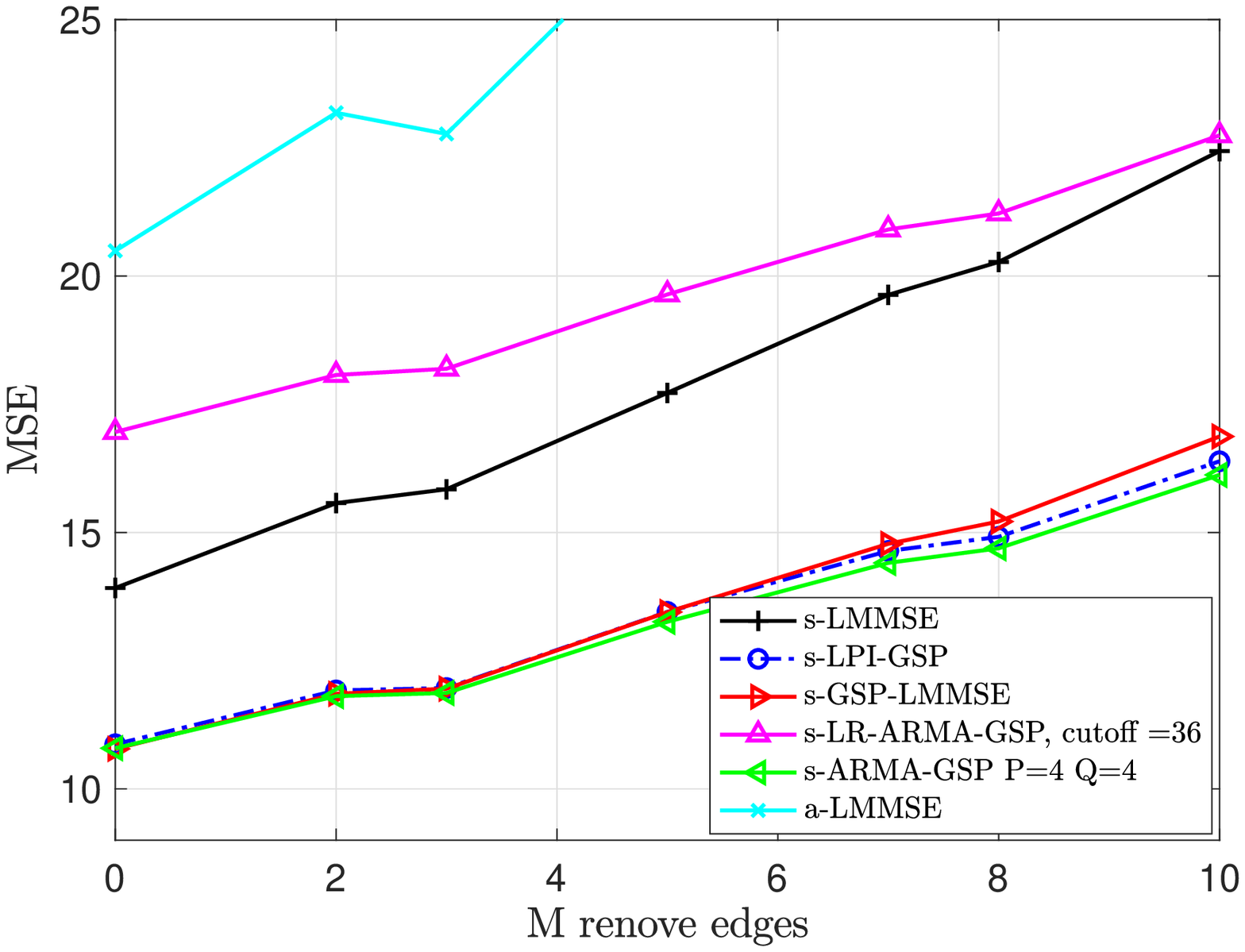}\vspace{-0.2cm}} 
	\caption{
	The MSE of the updated GSP estimators, the sample-LMMSE estimator, and the aLMMSE estimator, where $P = 500$, $\sigma^2 = 0.05$, and $\beta = 3$, for (a) the addition of $M$ new edges; and
	(b) a removal of $M$ edges. The error bars show conﬁdence intervals of $\pm 0.5$ standard deviations.}
\end{figure}

\subsubsection{Estimation under vertices changes}
When $M$ vertices are removed or added, the problem dimension changes.
That is, $\xvec,\yvec \in \mathbb{R}^{N\pm M}.$
Since the sample-LMMSE and the sample-GSP-LMMSE estimators are $\mathbb{R}^{N} \rightarrow \mathbb{R}^{N}$ estimators, they cannot be implemented in the new problem.
Therefore, we use the following methods:
1) for $M$ new vertices, the sample-LMMSE and the sample-GSP-LMMSE estimators estimate the signal at the  new vertices by zero and do not use measurements from those vertices; 2) for $M$ removed vertices, the sample-LMMSE and the sample-GSP-LMMSE estimators are updated by removing the appropriate rows and columns. That is, 
 the sample-LMMSE estimator from \eqref{Monte_Carlo_LMMSE} is given by
\begin{equation}
\label{update1}
\begin{split}
    \hat\xvec^{(\text{sLMMSE})}_{\mathcal{S}} = \EX[\xvec] + [\hat{\Cmat}_{{\xvec\yvec}}\hat{\Cmat}_{{\yvec\yvec}}^{-1}]_{\mathcal{S}}\left(\yvec_{\mathcal{S}} - \hat{\yvec}_{\mathcal{S}}\right),
    \end{split}
\end{equation}
and the sample-GSP-LMMSE estimator from \eqref{sample_LMMSE_GSP} is given by
\begin{equation}
\label{update2}
\begin{split}
    \hat{\xvec}^{(\text{sGSP-LMMSE})}_{\mathcal{S}}  = \EX[\xvec] +  [\Vmat\hat{f}(\Lambdamat)\Vmat^T]_{\mathcal{S}}
    (\yvec_{\mathcal{S}} - \hat{\yvec}_{\mathcal{S}}),
    \end{split}
\end{equation}
where we use 
the historical eigenvalue and eigenvectors of the Laplacian matrix and the historical sample values of the covariance matrices, evaluated using the historical dataset.
In \eqref{update1} and \eqref{update2}
we use the notation that
 $\Amat_{\mathcal{S}}$ is the submatrix of $\Amat$ whose rows and columns are indicated by the set $\mathcal{S}$,
 where $\mathcal{S}$
is the set of the remaining vertices. 
In addition, any sample-h-GSP estimator from \eqref{sGSP_estimator}, can be updated to the new topology for both cases: vertices added or removed, as follows:
\begin{equation} \label{update_GSP_estimator2}
    \hat{\xvec}^{(\text{update-sh-GSP})}  = \EX[\xvec] +  \Vmat h(\Lambdamat;\hat{\alphavec}^{\text{sample}})\Vmat^T(\yvec - \hat{\bar{\yvec}}),
\end{equation}
where we use the true value of $\EX[\xvec]$, the eigenvalue and eigenvectors of the Laplacian matrix with the filter coefficient, $\hat{\alphavec}^{\text{sample}}$, that were evaluated using the historical dataset and $\hat{\bar{\yvec}} \in \mathbb{R}^{N\pm M} $ that is defined by 
\begin{equation}
    [\hat{\bar{\yvec}}]_n  =\begin{cases}
    0 & \text{if } n \text{ is added} \\
    [\hat{\yvec}]_n, &\text{if  } n \text{ is unchanged} \end{cases},~n=1,\ldots,N,
    \end{equation}
 where $\hat{\yvec}$  is evaluated for the old topology by  \eqref{sample_mean}.

Figures  \ref{fig:add vertices} and  \ref{fig:remove vertices} present the case where the historical sample values were calculated from the dataset evaluated on a topology before the addition or removal of the  $M$ vertices, respectively.
Since there is no straightforward methodology to update the sample-LMMSE and the sample-GSP-LMMSE estimators to the new topology, their  MSE is impaired for added and removed vertices. 
Moreover, even when a small number of new vertices is added, the MSE of the sample-LMMSE and the sample-GSP-LMMSE estimators significantly increases.
The other estimators' performance for the cases of added or removed vertices is similar to the result in the cases of added or removed edges in Figs. \ref{fig:add edge} and \ref{fig:remove edge}.
In Figs. \ref{fig:add vertices} and \ref{fig:remove vertices}, the conﬁdence intervals of $\pm 0.5$ standard deviations are significant due to the variability in the topology of the different experiments. Thus, increasing the number of experiments does not reduce this conﬁdence interval.
\begin{figure}[hbt]
    \centering
	\subcaptionbox{\label{fig:add vertices}}[\linewidth]
	{ \includegraphics[width=0.75\linewidth]{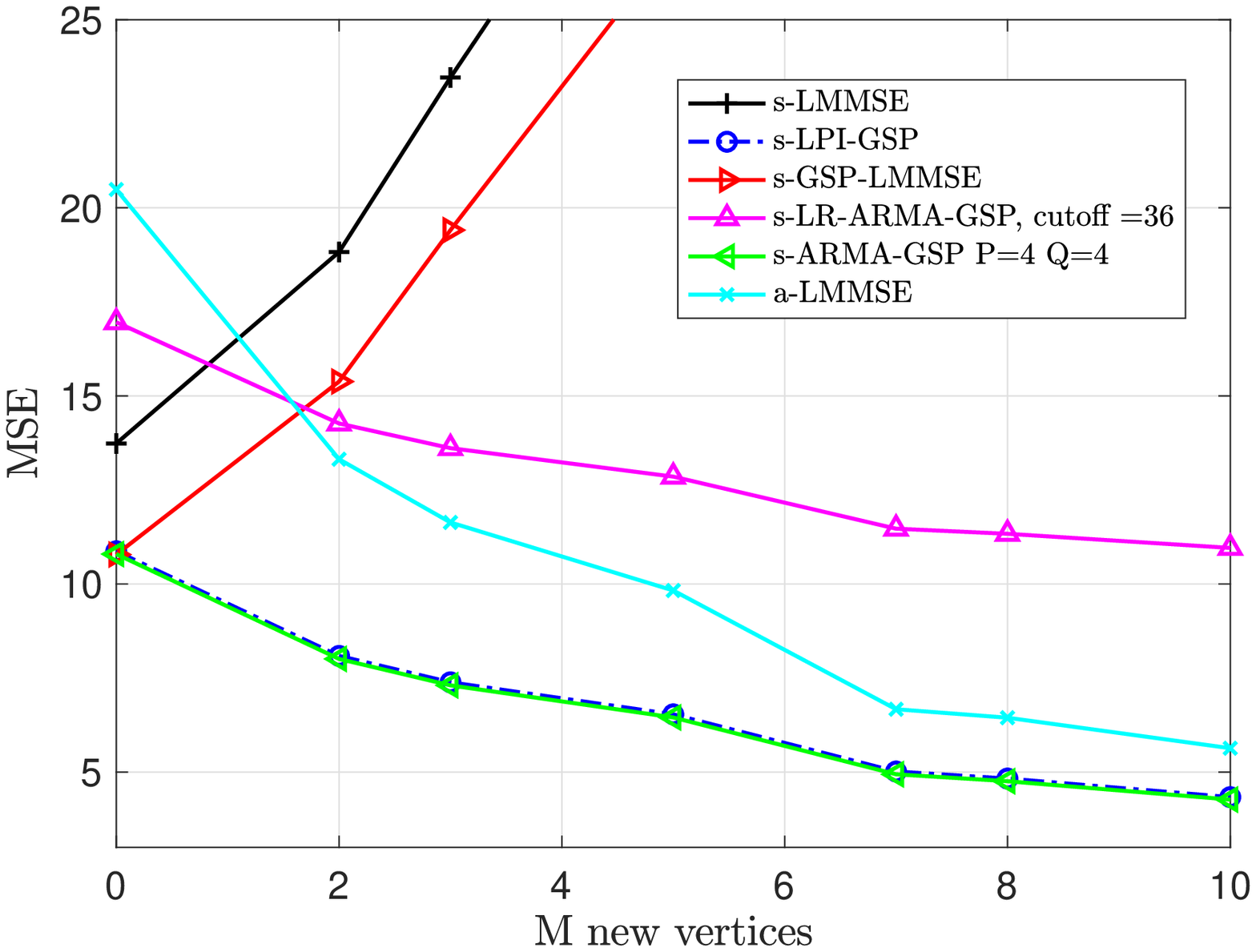}\vspace{-0.2cm}}
	\subcaptionbox{\label{fig:remove vertices}}[\linewidth]
	{\includegraphics[width=0.75\linewidth]{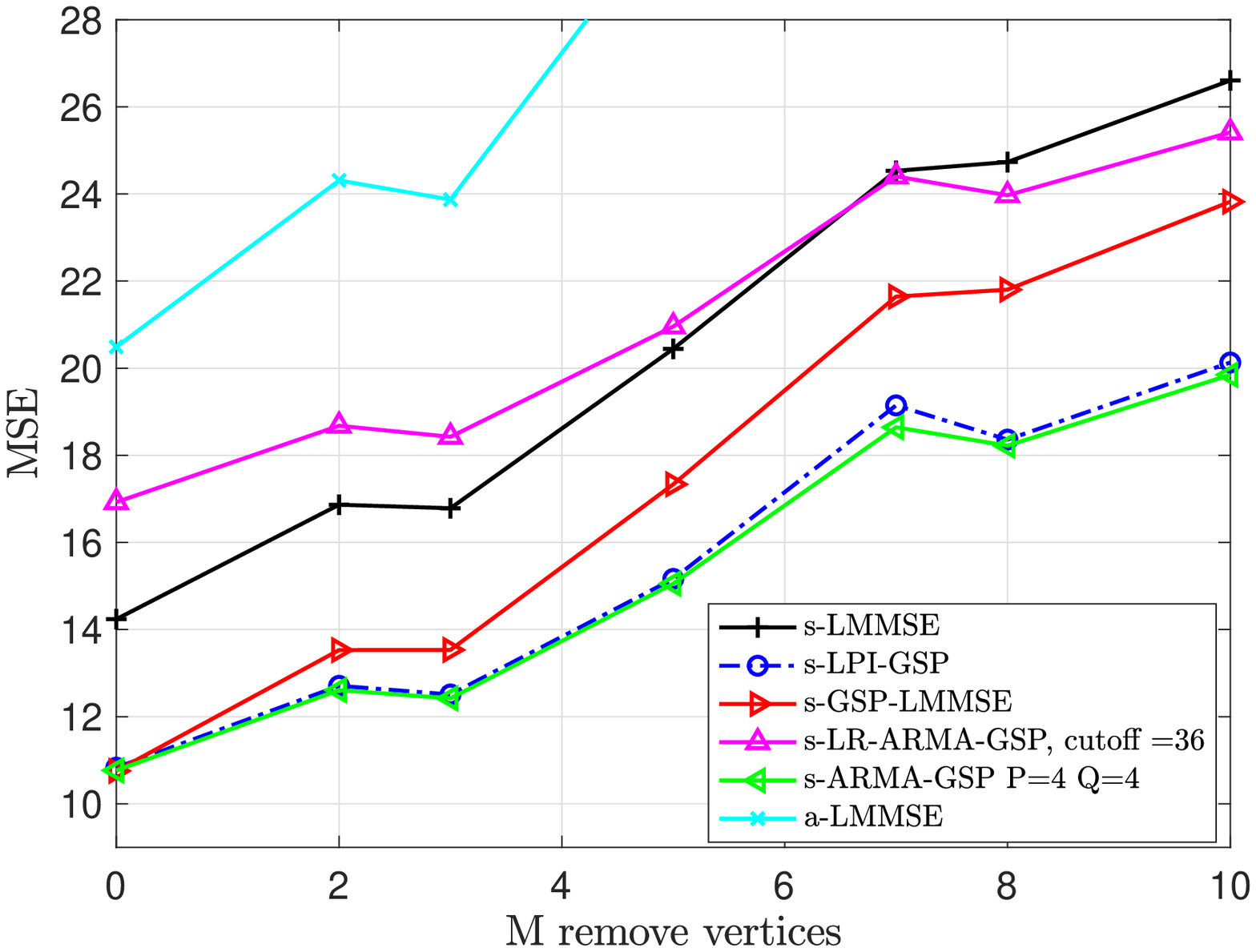}\vspace{-0.2cm}} 
	\caption{
	The MSE of the updated GSP estimators, the sample-LMMSE estimator, and the aLMMSE estimator, where $P = 500$, $\sigma^2 = 0.05$, and $\beta = 3$, for  (a) an addition of $M$ new vertices; and
	(b) a removal of  $M$. The error bars show conﬁdence intervals of $\pm 0.5$ standard deviations.}
	\vspace{-0.5cm}
\end{figure}
\section{conclusion} \label{conclusion}
In this paper, we discuss a GSP-based Bayesian approach for the recovery of random graph signals from nonlinear measurements. We develop the GSP-LMMSE estimator, which minimizes the MSE among the subset of estimators that are represented as an output of a graph filter. We evaluate the conditions for the GSP-LMMSE estimator to coincide with
the LMMSE estimator and with the graphical Wiener filter. If the distributions of the graph signal and the observations are intractable, the sample-mean versions of the different estimators can be used. The diagonal structure of the sample-GSP-LMMSE estimator in the graph frequency domain bypasses the requirement for  an extensive dataset  to obtain stable estimation of the sample-LMMSE estimator. However, the GSP-LMMSE estimator is a function of the specific graph structure with fixed dimensions, and thus it is not necessarily optimal when the topology changes and  is not adaptive to changes in the number of vertices. Therefore, we develop the sample-h-GSP estimators that are the MSE-optimal parametrization of the sample-GSP-LMMSE estimator by  graph filters. The sample-h-GSP estimators can be updated when the topology changes without generating a new  dataset, even in the case of changes in the number of vertices.

In the simulations, we show that the proposed sample-GSP estimators achieve lower MSE than the sample-LMMSE estimator for a limited training dataset, and they coincide with the sample-LMMSE estimator for sufficiently large datasets.
In addition, it is shown that 
the three specific parametric implementations of the GSP-LMMSE: the linear pseudo-inverse GSP estimator,  ARMA GSP estimator, and the low-rank ARMA GSP estimator, are  robust to
changes in the topology  without the need for generating new training data.
The sample linear pseudo-inverse GSP
 and the sample ARMA GSP estimators achieve the lowest MSE in these cases, where the ARMA GSP estimator requires less filter coefficients. 
 Thus, the proposed approach  is a practical method to recover nonlinear graph signals in networks.
 
There are several directions left for future work. 
One direction is to study the use of graph neural networks and other nonlinear approaches \cite{Isufi_Ribeiro,gama2020graphs}.
In addition,  the development of Bayesian  bounds on the MSE of general (not necessarily linear) estimators
of graph signals, in a similar manner to the  non-Bayesian  graph Cram$\acute{\text{e}}$r-Rao bound from \cite{routtenberg2020}, should be investigated. Finally, it is interesting to consider distributed implementation of the proposed estimators that include the computation of the optimal coefficient vector and the diagonal sample covariance matrices.

\appendices
	\renewcommand{\thesectiondis}[2]{\Alph{section}:}
\section{Proof of Theorem \ref{claim_separately_Model}} \label{separately_Model_Appendix}
In this Appendix, we show that under the conditions of Theorem \ref{claim_separately_Model} the equality in \eqref{coincides_condition} holds.

Since $\xvec$ and $\wvec$ are statistically independent under the considered model, the covariance matrix of $\tilde{\yvec}$ is given by
\begin{equation} \label{separately_Model_Appendix_covariance_matrix}
  \Cmat_{\tilde{\yvec}\tilde{\yvec}} 
  = \Cmat_{\tilde{\gvec}\tilde{\gvec}} + \Cmat_{\tilde{\wvec}\tilde{\wvec}},
\end{equation}
where $\tilde{\gvec}=\Vmat^T\gvec(\Lmat,\xvec)$.
Similarly,
\begin{equation} \label{crros_covariance_matrix_app_2}
\Cmat_{\tilde{\xvec}\tilde{\yvec}}= \Cmat_{\tilde{\xvec}\tilde{\gvec}}.
\end{equation}
Since from Condition \ref{cond1} the measurement function, $\gvec(\Lmat,\xvec)$, satisfies \eqref{separately_Model}, we obtain that 
the off-diagonal elements of the  matrix $\Cmat_{\tilde{\gvec}\tilde{\gvec}}$ from \eqref{separately_Model_Appendix_covariance_matrix} satisfy
\begin{eqnarray} \label{covariance_matrix_app_1}
    [\Cmat_{\tilde{\gvec}\tilde{\gvec}}]_{n,k}  = 
    \EX\Big[([\tilde{\gvec}(\Lmat,\xvec)]_{n} - \EX[\tilde{\gvec}(\Lmat,\xvec)]_n)
    \hspace{1.5cm}\nonumber\\ \times ([\tilde{\gvec}(\Lmat,\xvec)]_{k} - \EX[\tilde{\gvec}(\Lmat,\xvec)]_k)\Big] \hspace{1.45cm}\nonumber\\
    =
    \EX\Big[([\tilde{\gvec}(\Lmat,\tilde{x}_n\vvec_n)]_{n} - \EX[\tilde{\gvec}(\Lmat,\xvec)]_n)
    \hspace{0.9cm}\nonumber\\ \times ([\tilde{\gvec}(\Lmat,\tilde{x}_k\vvec_k)]_{k} - \EX[\tilde{\gvec}(\Lmat,\xvec)]_k)\Big] 
    =0,
\eeqna
for any $n\neq k$,
where the last equality follows from Condition \ref{cond2}. Substituting \eqref{covariance_matrix_app_1} in \eqref{separately_Model_Appendix_covariance_matrix} and using Condition \ref{cond3}, which implies that $\Cmat_{\tilde{\wvec}\tilde{\wvec}}$ is a diagonal matrix, the matrix $\Cmat_{\tilde{\yvec}\tilde{\yvec}}$ is also a diagonal matrix, which satisfies
\begin{equation}\label{separately_Model_Appendix_C_xx_f}
    \Cmat_{\tilde{\yvec}\tilde{\yvec}} =	\text{diag}\left(  \text{diag}(\Cmat_{\tilde{\yvec}\tilde{\yvec}} ) \right)= {\Dmat}_{\tilde{\yvec}\tilde{\yvec}}.
\end{equation}
Similarly, using  Condition \ref{cond1}, the off-diagonal elements of the cross-covariance matrix,
$\Cmat_{\tilde{\xvec}\tilde{\gvec}}$ from \eqref{crros_covariance_matrix_app_2} satisfy
\begin{eqnarray} \label{separately_Model_Appendix_C_tx}
    [\Cmat_{\tilde{\xvec}\tilde{\gvec}}]_{n,k}  
    = 
    \EX\left[\left([\tilde{\xvec}]_n- \EX[\tilde{\xvec}]_n\right)
    \left([\tilde{\gvec}(\Lmat,\xvec)]_k- \EX[\tilde{\gvec}(\Lmat,\xvec)]_k\right)\right] 
     \nonumber\\ = 
    \EX\left[\left([\tilde{\xvec}]_n- \EX[\tilde{\xvec}]_n\right) 
    \left([\tilde{\gvec}(\Lmat,\tilde{x}_k\vvec_k)]_k- \EX[\tilde{\gvec}(\Lmat,\xvec)]_k\right)\right] 
    \hspace{-0.55cm}\nonumber\\= 0,\hspace{6.33cm}
\end{eqnarray}
for any $n\neq k$, where the last equality follows from Condition \ref{cond2}. 
By substituting \eqref{separately_Model_Appendix_C_tx} in \eqref{crros_covariance_matrix_app_2}, we have
\begin{equation} \label{separately_Model_Appendix_C_tx_f}
    \Cmat_{\tilde{\xvec}\tilde{\yvec}} ={\text{diag}}(\Cmat_{\tilde{\xvec}\tilde{\yvec}})= \text{diag}({\dvec}_{\tilde{\xvec}\tilde{\yvec}}).
\end{equation}
Therefore, \eqref{separately_Model_Appendix_C_xx_f} and \eqref{separately_Model_Appendix_C_tx_f} imply that 
 $\Cmat_{\tilde{\xvec}\tilde{\yvec}}$ and $\Cmat_{\tilde{\yvec}\tilde{\yvec}}$ are diagonal matrices, and that \eqref{coincides_condition} holds.

\section{Proof of Theorem \ref{claim_graphical_Model}} \label{graphical_Model_Appendix}
In this Appendix, we show that under the assumption of Theorem \ref{claim_graphical_Model} the equality in \eqref{coincides_condition} holds.
By using \eqref{g_filter} from Condition \ref{cond4}, we obtain that
$\tilde{\gvec}(\Lmat,\xvec) =  f(\Lambdamat) \Vmat^{T}{\xvec}$ and thus, 
\begin{eqnarray} \label{covariance_matrix_app_2}
    \Cmat_{\tilde{\gvec}\tilde{\gvec}}
    =
    \EX[( f(\Lambdamat)(\tilde{\xvec}- \EX[\tilde{\xvec})])
    ( f(\Lambdamat)(\tilde{\xvec}- \EX[\tilde{\xvec})])^T]
\nonumber\\
    =
    f(\Lambdamat)\Cmat_{\tilde{\xvec}\tilde{\xvec}}f(\Lambdamat),\hspace{3.5cm}
\end{eqnarray}
where we use the symmetry of $f(\Lambdamat)$.
By substituting \eqref{covariance_matrix_app_2} in \eqref{separately_Model_Appendix_covariance_matrix} from Appendix \ref{separately_Model_Appendix} and using Condition \ref{cond3} and Condition \ref{cond5}, which implies that $\Cmat_{\tilde{\wvec}\tilde{\wvec}}$ and $\Cmat_{\tilde{\xvec}\tilde{\xvec}}$ are diagonal matrices, we obtain that  $\Cmat_{\tilde{\yvec}\tilde{\yvec}}$ is a diagonal matrix, which satisfies
\begin{equation}\label{graphical_Model_Appendix_C_xx_f}
    \Cmat_{\tilde{\yvec}\tilde{\yvec}} =\text{diag}\left(  \text{diag}(\Cmat_{\tilde{\yvec}\tilde{\yvec}} ) \right)= {\Dmat}_{\tilde{\yvec}\tilde{\yvec}}.
\end{equation}
Similarly, the cross-covariance matrix, $\Cmat_{\tilde{\xvec}\tilde{\gvec}}$ from \eqref{crros_covariance_matrix_app_2} satisfies
\begin{eqnarray} \label{graphical_Model_Appendix_C_tx}
    \Cmat_{\tilde{\xvec}\tilde{\gvec}}  = 
    \EX\left[(\tilde{\xvec}- \EX[\tilde{\xvec}])
    ( f(\Lambdamat)\left(\tilde{\xvec}- \EX[\tilde{\xvec}]\right))^T\right]
=
    \Cmat_{\tilde{\xvec}\tilde{\xvec}}f(\Lambdamat),
\end{eqnarray}
where we use the fact that $f(\Lambdamat)$ is a symmetric matrix. 
By substituting \eqref{graphical_Model_Appendix_C_tx} in \eqref{crros_covariance_matrix_app_2} from Appendix \ref{separately_Model_Appendix}, the cross-covariance matrix of $\tilde{\xvec}$ and $\tilde{\yvec}$ is a diagonal matrix, and satisfies
\begin{equation} \label{graphical_Model_Appendix_C_tx_f}
    \Cmat_{\tilde{\xvec}\tilde{\yvec}} = \text{diag}({\dvec}_{\tilde{\xvec}\tilde{\yvec}}),
\end{equation}
 where ${\dvec}_{\tilde{\xvec}\tilde{\yvec}}$ is defined in \eqref{d_def}.
 Therefore, \eqref{graphical_Model_Appendix_C_xx_f} and \eqref{graphical_Model_Appendix_C_tx_f} imply that 
 $\Cmat_{\tilde{\xvec}\tilde{\yvec}}$ and $\Cmat_{\tilde{\yvec}\tilde{\yvec}}$ are diagonal matrices, and that \eqref{coincides_condition} holds.

\section{Derivation of \eqref{filter_coefficients_opt}} \label{WLS_formulation}
In this Appendix we show that solving \eqref{derivative_opt_alpha} is equivalent to \eqref{filter_coefficients_opt}.
By adding and subtracting $\hat{f}(\Lambdamat)(\tilde{\yvec} - \EX[\tilde{\yvec}])$ from the r.h.s. of \eqref{derivative_opt_alpha}, one obtains
\begin{eqnarray}
  \hat{\alphavec} 
  = \argmin_{\alphavecsmall\in \Omega_{\alphavecsmall}} 
  \EX [||h(\Lambdamat;\alphavec)(\tilde{\yvec} - \EX[\tilde{\yvec}]) - (\tilde{\xvec} - \EX[\tilde{\xvec}]) 
  \hspace{0.9cm}\nonumber \\ 
  -\hat{f}(\Lambdamat)(\tilde{\yvec} - \EX[\tilde{\yvec}]) +
  \hat{f}(\Lambdamat)(\tilde{\yvec} - \EX[\tilde{\yvec}]) ||^2]
  \hspace{0.40cm}\nonumber\\=
  \argmin_{\alphavecsmall\in \Omega_{\alphavecsmall}} 
  \EX [||
  ( (h(\Lambdamat;\alphavec)-\hat{f}(\Lambdamat) ) (\tilde{\yvec} - \EX[\tilde{\yvec}]))
  ||^2] 
  \hspace{0.13cm}\nonumber\\+
  2\EX [(\hat{f}(\Lambdamat)(\tilde{\yvec} - \EX[\tilde{\yvec}]) - ((\tilde{\xvec} -\EX[\tilde{\xvec}]) ))^T
  \hspace{0.009cm}\nonumber\\ \times
  ( (h(\Lambdamat;\alphavec)-\hat{f}(\Lambdamat) ) (\tilde{\yvec} - \EX[\tilde{\yvec}])
  )],\hspace{0.85cm}
\label{last_eq}
\end{eqnarray}
where the last equality is obtained by removing constant terms w.r.t. $\alphavec$.
In addition, by substituting $\hat{f}(\Lambdamat)$ from \eqref{opt_f} in the last term of \eqref{last_eq} it can be verified that
\begin{eqnarray}
  \EX [(\hat{f}(\Lambdamat)(\tilde{\yvec} - \EX[\tilde{\yvec}]) - ((\tilde{\xvec} -\EX[\tilde{\xvec}]) ))^T
  \hspace{2.3cm}\nonumber\\
  \times
  ((h(\Lambdamat;\alphavec)-\hat{f}(\Lambdamat)) (\tilde{\yvec} - \EX[\tilde{\yvec}])
  )] 
  \nonumber\\=
  (\text{diag}(\hat{f}(\Lambdamat)))^T\Dmat_{\tilde{\yvec}\tilde{\yvec}}( \text{diag}(h(\Lambdamat;\alphavec))- \text{diag}(\hat{f}(\Lambdamat))  )
  \hspace{0.06cm}\nonumber\\
  - \dvec_{\tilde{\xvec}\tilde{\yvec}}^T( \text{diag}(h(\Lambdamat;\alphavec))- \text{diag}(\hat{f}(\Lambdamat))  ) = 0,\hspace{1.5cm}
\label{zerp_eq}
\end{eqnarray}
where $\dvec_{\tilde{\xvec}\tilde{\yvec}}$
 and $\Dmat_{\tilde{\yvec}\tilde{\yvec}}$ are defined in  \eqref{d_def}.
By substituting 
\eqref{zerp_eq} in \eqref{last_eq}, one obtains
\begin{eqnarray}
  \hat{\alphavec} = \argmin_{\alphavecsmall\in \Omega_{\alphavecsmall}}
  \EX [||
   (h(\Lambdamat;\alphavec)-\hat{f}(\Lambdamat) ) (\tilde{\yvec} - \EX[\tilde{\yvec}])
  ||^2] 
  \nonumber\\
    = 
  \EX [||
   \text{diag}(\tilde{\yvec} - \EX[\tilde{\yvec}])\text{diag}(h(\Lambdamat;\alphavec)-\hat{f}(\Lambdamat) ) 
  ||^2]  
   \hspace{-0.138cm}\nonumber\\ = 
    (\text{diag}(h(\Lambdamat;\alphavec))-\text{diag}(\hat{f}(\Lambdamat)) )^T\Dmat_{\tilde{\yvec}\tilde{\yvec}}
    \hspace{1.4cm}\nonumber\\ 
    \times (\text{diag}(h(\Lambdamat;\alphavec))-\text{diag}(\hat{f}(\Lambdamat)) ),
\label{last_app_B}
\end{eqnarray} 
where the second equality follows since $h(\Lambdamat;\alphavec))$ and $(\hat{f}(\Lambdamat)$ are diagonal matrices.
Then, since $\Dmat_{\tilde{\yvec}\tilde{\yvec}}$ is a diagonal matrix, and thus, a symmetric matrix and it is assumed that it is also a non-singular (and therefore, positive definite) matrix, we obtain \eqref{filter_coefficients_opt} by substituting $\Dmat_{\tilde{\yvec}\tilde{\yvec}} = \Dmat_{\tilde{\yvec}\tilde{\yvec}}^{\frac{1}{2}}\Dmat_{\tilde{\yvec}\tilde{\yvec}}^{\frac{1}{2}}$ and ${\text{diag}}({\text{diag}}(\dvec_{\tilde{\xvec}\tilde{\yvec}})\Dmat_{\tilde{\yvec}\tilde{\yvec}}^{-1}) = \Dmat_{\tilde{\yvec}\tilde{\yvec}}^{-1}\dvec_{\tilde{\xvec}\tilde{\yvec}} $ in \eqref{last_app_B}.

\section{Proof of Claim \ref{claim1} and Claim \ref{claim2}} \label{rank_Appendix}
We prove Claim \ref{claim1} by showing  that $\text{rank}(\bar\Gammamat) = K+1$ under the condition in Claim \ref{claim1}, where
$\bar\Gammamat_K$ is defined in \eqref{barGammamat} and $K+1\leq N$.
First, it can be seen that 
\begin{eqnarray} \label{Gamma2}
\bar{\Gammamat}_K 
	=
	\text{diag}(1,\lambda_2^{-K},\dots,\lambda_N^{-K}) \Bmat_K(\lambda_1,\ldots,\lambda_N),
\end{eqnarray}
where 
\[
\Bmat_K(\lambda_1,\ldots,\lambda_N)\define 	\begin{bmatrix}
		1 & 0 &\dots & 0&0 \\
		\lambda_2^{K} & \lambda_2^{K-1} &\dots &\lambda_2& 1 \\
		\vdots & \vdots & \ddots & \vdots& \vdots\\
		\lambda_N^{K} & \lambda_N^{K-1} & \dots &\lambda_N& 1
	\end{bmatrix}.
	\]
We assume that the graph is connected, i.e. $0<\lambda_n$, $ n=2,\ldots,N$.
Thus, $\text{diag}(1,\lambda_1^{-K},\dots,\lambda_N^{-K})$ is a non-singular matrix.
The multiplication of $\Bmat_K(\lambda_1,\ldots,\lambda_N)$ by an $N\times N$ non-singular matrix, 
in \eqref{Gamma2} implies that (see 0.4.6 in \cite{Horn_Johnson_book})
\begin{eqnarray}
\label{Gamma3}
	\text{rank}(\bar{\Gammamat}_K )
	= \text{rank}(\Bmat_K(\lambda_1,\ldots,\lambda_N)).
\end{eqnarray}
By reordering the columns of $\Bmat_K(\lambda_1,\ldots,\lambda_N)$ and using the properties  of 
the Vandermonde matrix $\Phivec(N-1,K)$
from \eqref{Gammamat_def} (see 0.9.11 \cite{Horn_Johnson_book}), we obtain that if there are $K+1$ distinct eigenvalues, then 
$\text{rank}(\Bmat_K(\lambda_1,\ldots,\lambda_N))=K+1$.
By using \eqref{Gamma3}, this statement implies that 
if there are $K+1$ distinct eigenvalues, then 
$\text{rank}(\bar{\Gammamat}_K )=K+1$.

Second, we prove Claim \ref{claim2} by showing that
\begin{equation} \label{Matrix_full_rank_1}
   \text{rank}((\text{diag}(\Phivec(N,R)\avec))^{-1} \Phivec(N,Q)) = Q+1
\end{equation} 
under the condition in Claim \ref{claim2}, where
$\Phivec(N,R)$ and $\Phivec(N,Q)$ are defined in \eqref{Gammamat_def}, and $Q+1\leq N$.
We assume that $[\Phivec(N,R)\avec]_n \neq 0$,  $\forall n=1,\dots,N$ and thus, $\left(\text{diag}(\Phivec(N,R)\avec)\right)$ is a non-singular matrix.
The multiplication of $\Phivec(N,Q)$ by an $N\times N$ non-singular matrix, $(\text{diag}(\Phivec(N,R)\avec))^{-1}$, in \eqref{Matrix_full_rank_1} implies that (see 0.4.6 in \cite{Horn_Johnson_book})
\begin{equation} \label{Matrix_full_rank_2}
 \text{rank}((\text{diag}(\Phivec(N,R)\avec))^{-1} \Phivec(N,Q)) = \text{rank}(\Phivec(N,Q)).\hspace{-0.07cm} 
\end{equation} 
Using the properties (see 0.9.11 \cite{Horn_Johnson_book}) of 
the Vandermonde matrix, $\Phivec(N,Q)$, we obtain that if there are $Q+1$ distinct eigenvalues, then $\text{rank}(\Phivec(N,Q))=Q+1$.
By using \eqref{Matrix_full_rank_2}, this statement implies that 
if there are $Q+1$ distinct eigenvalues, then 
$\text{rank}((\text{diag}(\Phivec(N,R)\avec))^{-1} \Phivec(N,Q)) = Q+1$.

\bibliographystyle{IEEEtran}

\end{document}